\newtheorem{thm}{Theorem}
\newtheorem{lem}{Lemma}
\newtheorem{corollary}{Corollary}
\newtheorem{cor}{Corollary}
\def\ve{\varepsilon}
\def\s{\sigma}
\def\mcM{\mathcal M}
\def\mcF{\mathcal M}
\title{Stochastic Games with Limited Public Memory}
\author{}
\date{}
\author[1]{Kristoffer Arnsfelt Hansen}
\author[2]{Rasmus Ibsen-Jensen}
\author[3]{Abraham Neyman}
\affil[1]{Aarhus University, arnsfelt@cs.au.dk}
\affil[2]{University of Liverpool, R.Ibsen-Jensen@liverpool.ac.uk}
\affil[3]{Hebrew University, aneyman@huji.ac.il}
\begin{document}
\maketitle
\thispagestyle{empty}

\begin{abstract}
We study the memory resources required for near-optimal play in two-player zero-sum stochastic games with the long-run average payoff. Although optimal strategies may not exist in such games, near-optimal strategies always do.

A \emph{memory-based strategy} selects an action at each stage based on the current game state, stage number, and memory state. The memory state, which summarizes the past play, is updated stochastically at each stage as a function of the current play, the current memory state, and the stage number.
A \emph{public-memory strategy} is a memory-based strategy in which the opponent is allowed to condition her actions on the player's current memory state.

Mertens and Neyman (1981) proved that in any stochastic game, for any $\varepsilon>0$, there exist uniform $\varepsilon$-optimal memory-based strategies---i.e., strategies that are $\varepsilon$-optimal in all sufficiently long $n$-stage games---that use at most $O(n)$ memory states within the first $n$ stages. We improve this bound on the number of memory states by proving that in any stochastic game, for any $\varepsilon>0$, there exist uniform $\varepsilon$-optimal memory-based strategies that use at most $O(\log n)$ memory states in the first $n$ stages. Moreover, we establish the existence of uniform $\varepsilon$-optimal memory-based strategies whose memory updating and action selection are time-independent and such that, with probability close to 1, for all $n$, the number of memory states used up to stage $n$ is at most $O(\log n)$.

This result cannot be extended to strategies with bounded public memory---even if time-dependent memory updating and action selection are allowed. This impossibility is illustrated in the Big Match---a well-known stochastic game where the stage payoffs to Player 1 are 0 or 1. Although for any $\varepsilon > 0$, there exist strategies of Player 1 that guarantee a payoff {exceeding} $1/2 - \varepsilon$ in all sufficiently long $n$-stage games, we show that any strategy of Player 1 that uses a finite public memory fails to guarantee a payoff greater than $\varepsilon$ in any sufficiently long $n$-stage game.

\end{abstract} 
\newpage\setcounter{page}{1}

\section{Introduction}
One of the fundamental questions in computer science concerns the computational resources required to solve complex problems, with particular focus on time and space complexity. In decision-making settings, memory plays a crucial role in determining whether near-optimal strategies can be computed and implemented efficiently. This is especially the case in general models of competitive multi-stage interactions such as stochastic games. Stochastic games finds applications in diverse scientific areas, including computer science. A few examples include synthesis of synchronized programs~\cite{dAHM00,dAHM01}, preventing attacks on crypto-currency protocols~\cite{CGIV18}, radio networks~\cite{FS08}, submarine warfare~\cite{CS67}, and explaining how cooperation can arise in nature~\cite{HSCN18}. Further applications in economics and related fields are surveyed by Amir~\cite{Amir02}.

In this paper we address the fundamantal challenge of quantifying the memory resources required for near-optimal play in stochastic games.

\subsection*{Stochastic Games and the Trade-off between Short- and Long-Term Payoffs}

A \emph{stochastic game}, introduced by Shapley~\cite{sha53}, is a two-player zero-sum multistage game where the state evolves over time based on the players' actions. The game proceeds in discrete stages $t = 1,2,\dots$, with each stage beginning in one of finitely many states. In every stage, each player selects an action from a finite set, and the resulting stage payoff $r_t$ (to player 1, the maximizer) and the transition probabilities to the next state depend on the current state and the chosen actions. Crucially, while each player observes the current state and all past actions, he must select his action simultaneously with the other player, without knowing the other player's choice of action for that stage.

Shapley’s framework has been extended in several directions, including to games with infinitely many states and actions and to multi-player, non-zero-sum settings. This paper focuses exclusively on \textbf{two-player zero-sum stochastic games with finitely many states and actions}, referred to henceforth simply as \emph{stochastic games}.

A defining feature of stochastic games is the trade-off between short-term and long-term objectives: in any given stage, a player must balance between maximizing his short-term payoffs and  influencing the game’s future states in order to maximize his long-term payoffs. This tension is fundamental in all models of stochastic games but takes on different characteristics depending on how payoffs are aggregated over time.

\subsection*{Three Payoff Models in Stochastic Games}

Three primary models are used to evaluate payoffs in stochastic games, each leading to distinct strategic considerations:

\begin{enumerate}
    \item \textbf{Discounted Payoff Model}: The \emph{$\lambda$-discounted game} assigns a payoff
    \[
    \sum_{t=1}^{\infty} \lambda (1 - \lambda)^{t-1} r_t,
    \]
    where $0 < \lambda \leq 1$ is the discount rate. Here, earlier payoffs are weighted more heavily, and the strategic trade-off between short- and long-term payoffs is \emph{independent of the stage number}. Shapley~\cite{sha53} proved that every discounted stochastic game has a well-defined value and that each player has an optimal \emph{stationary strategy}, i.e., a strategy whose choice of action depends only on the current state. Bewley and Kohlberg~\cite{bewkoh76} proved that the value of the $\lambda$-discounted game converges as $\lambda$ goes to 0.

    \item \textbf{Finite-Horizon Model}: The \emph{$n$-stage game} evaluates payoffs using the average
    \[
    \overline{r}_n = \frac{r_1 + \dots + r_n}{n}.
    \]
    Unlike in the discounted model, here the balance between short-term and long-term payoffs \emph{depends on the number of remaining stages}. It follows from backward induction that the current actions in optimal strategies for the $n$-stage game {can} depend only on the current state and the number of remaining stages. Bewley and Kohlberg~\cite{bewkoh76} proved that the value of the $n$-stage game converges as $n$ goes to $\infty$ and that the limit equals the limit of the values of the $\lambda$-discounted games as $\lambda$ goes to $0$. This common limit is the value of the stochastic game.

    \item \textbf{Long-Run Average Payoffs in Stochastic Games}: There are two main approaches to study stochastic games with long-run average payoffs. The {\em uniform approach} and the  {\em undiscounted  approach}.  {In the uniform approach, the game is viewed as either a finite-horizon game with an uncertain large number of stages, or a discounted game with an uncertain small discount rate. Each player aims to perform well in every sufficiently long finite-horizon game and for every sufficiently small discount rate. Throughout this paper, we refer to Player 1 as the maximizing player, and Player 2---interchangeably called the opponent or the adversary---as the minimizing player. In the undiscounted approach, the objective is to optimize a specific long-run average payoff criterion, such as the limit inferior or limit superior of the average payoff over the first $n$ stages.}

        A near-optimal strategy in the uniform approach is called a \emph{uniform $\varepsilon$-optimal strategy}, which is a strategy that is $\varepsilon$-optimal in all sufficiently long finite-horizon games.

        In the \emph{undiscounted case}, two extreme forms of long-run average payoffs are the \emph{limit superior} and \emph{limit inferior} of $\overline{r}_n$ as $n$ goes to infinity. Each one of these long-run average payoffs leads to a corresponding concept of a near-optimal strategy:\\
    - A \emph{$\limsup$ $\varepsilon$-optimal strategy} for Player 1 guarantees that the expectation of $\limsup_{n\to\infty}\overline{r}_n$ is at least the value of the stochastic game minus $\varepsilon$.\\
    - A \emph{$\liminf$ $\varepsilon$-optimal strategy} for Player 1 guarantees that the expectation of $\liminf_{n\to\infty}\overline{r}_n$ is at least the value of the stochastic game minus $\varepsilon$.

    The key challenge in long-run average payoff models is that, while players must still balance short- and long-term payoffs, there is no natural ``horizon'' to structure strategy adjustments, making the design of near-optimal strategies significantly more difficult. Mertens and Neyman~\cite{merney81,merney82} proved that in every stochastic game, Player 1 has, for every $\varepsilon>0$, a strategy that is \emph{both} uniform $\varepsilon$-optimal and $\liminf$ $\varepsilon$-optimal, ensuring near-optimality for all long-run average payoffs.
\end{enumerate}

\subsection*{Computational Complexity of Stochastic Games}

{While this paper focuses on the memory resources required for near-optimal play, another central area of research concerns the algorithms and computational complexity of determining the value of discounted~\cite{EY08,BFGMS24} and
limit-average~\cite{CMH08,HKLMT11,KIJM11,M21,BIJT24} stochastic games.
Computing the exact value is known to lie in PSPACE~\cite{EY08,BIJT24} in both cases. The best-known approximation algorithm is in $\mathrm{FNP}^{\mathrm{NP}}$~\cite{BIJT24} for the limit-average case, and in UEOPL~\cite{BFGMS24} for the discounted case. These results underscore the broader algorithmic challenges involved in stochastic games, which are complementary to the resource-focused questions addressed in this paper.
}

\subsection*{The Big Match and the Complexity of Stochastic Games with Long-run Average Payoffs}

Gillette~\cite{gil57} introduced \emph{undiscounted stochastic games}, where the payoff is a long-run average of stage payoffs. A well-known example, \emph{the Big Match}, illustrates the fundamental difficulty of balancing between maximizing short- and long-term payoffs.

In the Big Match:
\begin{itemize}
    \item Player 2 chooses 0 or 1 in each stage, and Player 1 attempts to predict Player 2's choice.
    \item Player 1 earns a point for each correct prediction.
    \item However, if Player 1 ever predicts 1, the game transitions to an absorbing state where all future payoffs are either 0 or 1, depending on whether Player 1’s prediction was correct at that stage.
\end{itemize}

For both the \textbf{finite-horizon} and \textbf{discounted} versions of the Big Match, the value of the game is $1/2$, and optimal strategies are well understood and can be explicitly computed. However, despite its simple structure, the Big Match with long-run average payoffs exhibits \emph{severe strategic complexities}. Unlike in the discounted game, where \textbf{stationary optimal strategies exist}, and the finite-horizon game, where \textbf{Markov optimal strategies exist}, any long-run average payoff setting \textbf{requires near-optimal strategies to incorporate the memory of past play}.

Moreover, \textbf{any strategy of Player 1 that guarantees a long-run average payoff that is larger than zero must base its choice of actions on past play}.

The Big Match is a special case of an \emph{absorbing game}—a stochastic game with a single nonabsorbing state—where play either eventually reaches an absorbing state or may continue indefinitely in the nonabsorbing state.

\subsection*{The Role of Memory in Near-Optimal Strategies}
Blackwell and Ferguson~\cite{blafer68} established strategies in the Big Match that are near-optimal for all long-run average payoffs. Kohlberg~\cite{koh74} extended this result to all absorbing games and Mertens and Neyman~\cite{merney81,merney82} extended this result to all stochastic games.

A \emph{memory-based strategy} is a strategy in which the choice of action depends on the current game state, the current stage number, and the current memory state.  The memory state, which serves as a summary of past play, is updated stochastically at each stage as a function of the stage number, the current memory state, and the current play.

{A \emph{stationary strategy} is a strategy in which action selection is time-independent, while a \emph{Markov strategy} allows the choice of actions to depend on the current stage (i.e., time-dependent). Both can be viewed as special cases of memory-based strategies that use a single memory state---effectively, strategies without memory.}

For several classes of stochastic games, such as stochastic games with perfect information and irreducible stochastic games, there exist stationary strategies that are near-optimal in the game with the long-run average payoff. However, in other stochastic games, such as the Big Match, no Markov strategy is near-optimal in any long-run average payoff setting, demonstrating the necessity of more complex memory structures.

Prior work~\cite{blafer68,koh74,merney81,merney82} presented near-optimal strategies in stochastic games with long-run average payoffs. In these near-optimal strategies, the number of memory states used up to stage $n$ grows linearly in $n$.

Recent work has begun to quantify and reduce the memory requirements for near-optimal play in the \emph{Big Match} and absorbing games with long-run average payoffs: Hansen, Ibsen-Jensen, and Kouchy~\cite{hanibskou16} showed that there exist near-optimal strategies that use only  {$(\log n)^{O(1)}$} memory states up to stage $n$ (this result was stated in~\cite{hanibskou16} as using $O(\log\log n)$ \emph{bits} of memory). Subsequently, Hansen, Ibsen-Jensen, and Neyman~\cite{hanibsney18, hanibsney23} demonstrated the existence of near-optimal strategies that use only finitely many memory states.

This paper quantifies and reduces memory requirements for near-optimal play in \emph{any} stochastic game with long-run average payoffs.
The main result implies that there exist, for every $\varepsilon>0$,  uniform $\varepsilon$-optimal memory-based strategies that use only $O(\log n)$ memory states up to stage $n$.

In this paper, we measure memory usage by counting the number of distinct memory states available to a strategy, rather than the number of bits required to encode these states. The two are, of course, related logarithmically: a strategy using $M$ memory states can be implemented using only $\log_2 M$ bits. Thus, our  bounds on the sufficient number of memory states are strictly stronger than analogous bounds stated in terms of memory space (bit complexity). In particular, a bound of $O(\log n)$ and a bound of $(O(\log n))^{O(1)}$ memory states both implies a bound of $O(\log \log n)$ bits of memory.

The strategies we construct are fundamentally different from the
previous limited memory strategies~\cite{hanibskou16,hanibsney18,
  hanibsney23}. In addition to only applying to the specific case of
the Big Match (or more generally absorbing games), these previous
strategies all make critical use of keeping the memory state
\emph{private}. The strategies of~\cite{hanibsney18, hanibsney23}
additionally rely on having access to the current stage number.

\subsection*{Time-dependent and time-independent choice of action and memory updating.} Different properties of memory-based strategies influence their simplicity and implementation.  The choice of action and the memory updating in any stage, which depends on the state of the game and memory state, can each be time-dependent, i.e., depending on the stage number,  or time-independent.

\subsection*{Public memory.}
The memory updating of a memory-based strategy can be deterministic or probabilistic.
A memory-based strategy whose memory updating is deterministic, enables the opponent to deduce from the memory-based strategy and the observed play the current memory state. Therefore, the memory states are necessarily public, which allows the opponent to condition his current action on the current memory state.

A memory-based strategy whose memory updating is probabilistic, enables the player to conceal the memory state from the opponent. Making the memory state public -- e.g., by using public random numbers for the probabilistic memory updating, or not concealing the updated memory state --  eliminates the need to conceal them from the opponent, simplifying implementation but potentially increasing strategic vulnerability---an important consideration in cybersecurity and adversarial decision-making. Section \ref{sec:pubvspri} discusses additional on public versus private memory

The earlier contributions~\cite{blafer68,koh74,merney81,merney82} presented near-optimal strategies in stochastic games with long-run average payoffs that use infinitely many memory states, with deterministic and time-independent memory updating, and with time-independent choice of action.

The near-optimal strategies of \cite{hanibskou16}, are time-independent and those of \cite{hanibsney18, hanibsney23} are time-dependent. The memory updating of the near-optimal strategies of \cite{hanibskou16,hanibsney18, hanibsney23}  is  probabilistic and the memory states are private.

\subsection*{Contributions of This Paper}
This paper advances the understanding of the public memory resources needed for near-optimal strategies in stochastic games with long-run average payoffs by proving the following main result:

\begin{itemize}
    \item In any stochastic game, each player has, for every $\varepsilon>0$,  uniform  $\varepsilon$-optimal public-memory strategies that, with probability close to 1, for all $n$, use at most $O(\log n)$ public memory states in the first $n$ stages.
    \item Moreover, both the memory updating and the choice of actions in these strategies are \emph{time-independent}.
\end{itemize}

This result further implies that in any stochastic game, each player has, for every $\varepsilon>0$, uniform $\varepsilon$-optimal public-memory strategies with time-dependent memory updating and time-independent choice of actions that use at most $O(\log n)$ public memory states in the first $n$ stages.

In contrast, we establish a strong worthlessness property of public-finite-memory strategies in the Big Match:
\begin{itemize}
    \item Any strategy of Player 1 in the Big Match that uses only finitely many public memory states cannot guarantee a long-run average payoff greater than 0 in any sufficiently long finite-horizon games.
    \item Moreover, for any finite-public-memory strategy of Player 1 in the Big Match and any $\varepsilon > 0$, there exists a strategy of Player 2 that yields a long-run average payoff of less than $\varepsilon$ in all sufficiently long finite-horizon games.
\end{itemize}

Even a weaker version of this worthlessness property, together with \cite[Section 3.2.1]{rasmusthesis}, implies that in the Big Match, any memory-based strategy of Player 1 with \emph{deterministic and time-independent memory updating} that guarantees a strictly positive payoff in infinitely many finite-horizon games must use at least $\Omega(n)$ memory states in the first $n$ stages.

Thus, our results highlight the fundamental advantage of \emph{probabilistic} memory updating over deterministic memory updating: while probabilistic time-independent memory updating enables near-optimal strategies that, with high probability, use at most $O(\log n)$ memory states, deterministic time-independent memory updating in the Big Match requires at least $\Omega(n)$ memory states to achieve similar guarantees in infinitely many finite-horizon games.

{While our results are stated for two-player zero-sum stochastic games, they naturally extend to the analysis of the minmax and maxmin values of a player in multi-player non-zero-sum stochastic games. This extension follows the classic approach of viewing the player of interest as the maximizing Player 1 and treating the group of all other players as the minimizing Player 2 in a two-player zero-sum reformulation.}

\section{The Stochastic Game Model and Memory-Based Strategies}
\subsection{Stochastic games}
A {\em two-person zero-sum  stochastic game} $\Gamma$, henceforth, a {\em stochastic game}, is defined by a tuple $(Z,I,J,r,p)$, where $Z$ is a finite state space, $I$ and $J$ are the finite actions sets of Players 1 and 2 respectively, $r:Z\times I\times J\to \mathbb{R}$ is a payoff function, and $p: Z\times I\times J\to \Delta(Z)$ is a transition function.

A state $z\in Z$ is called an {\em absorbing state} if $p(z,\cdot,\cdot)=\delta_{z}$, where $\delta_{z}$ is the Dirac measure on $z$.
An {\em absorbing game} is a stochastic games in which there exists exactly one state that is not absorbing.

 A {\em play} of the stochastic game is an infinite sequence $z_1,\ldots,z_t,i_t,j_t, \ldots$, where $(z_t,i_t,j_t)\in Z\times I\times J$. The set of all plays is denoted by $H_\infty$. A play up to stage $t$ is the finite sequence $h_t=(z_1,i_1,j_1,\ldots,z_t)$. The payoff $r_t$ in stage $t$ is $r(z_t,i_t,j_t)$ and the average of the payoffs in the first $n$ stages, $\frac1n\sum_{t=1}^{n}r_t$, is denoted by $\bar{r}_n$.

The initial state of the multi-stage game is $z_1\in Z$. In the $t$-th stage players simultaneously choose actions $i_t\in I$ and $j_t\in J$.

A behavioral strategy of Player 1, respectively Player 2, is a function $\s$, respectively $\tau$, from the disjoint union $\mathbin{\dot{\cup}}_{t=1}^{\infty} (Z\times I\times J)^{t-1}\times Z$ to $\Delta(I)$, respectively to $\Delta(J)$. The restriction of $\s$, respectively $\tau$, to $(Z\times I\times J)^{t-1}\times Z$ is denoted by $\s_t$, respectively $\tau_t$.
In what follows, $\s$ denotes a strategy of Player 1 and $\tau$ denotes a strategy of Player 2.

A strategy pair $(\s,\tau)$ defines a probability distribution $P_{\s,\tau}$ on the space of plays as follows. The conditional probability of $(i_t=i,j_t=j)$ given the play $h_t$ up to stage $t$ is the product of $\s(h_t)[i]$ and $ \tau(h_t)[j]$. The conditional distribution of $z_{t+1}$ given $h_t,i_t,j_t$ is $p(z_t,i_t,j_t)$. Given a strategy pair $(\s,\tau)$, the induced probability over plays is $P_{\s,\tau}$, and the expectation of a random variable $X$ under this distribution is denoted by $E_{\s,\tau}X$.

A stochastic game has a {\em value} $v=(v(z))_{z\in Z}$ if, for every $\ve>0$, there are strategies $\s_{\ve}$ and $\tau_{\ve}$ such that for some positive integer $n_\ve$
\begin{equation}\label{sgeoptimaluniform}\ve+E_{\s_\ve,\tau}\bar{r}_n\geq v(z_1)\geq E_{\s,\tau_\ve}\bar{r}_n-\ve \;\;\;\forall \s, \tau, n\geq n_\ve,
\end{equation}
and
\begin{equation}\label{sgeoptimallaverage}\ve+E_{\s_\ve,\tau}\liminf_{n\to \infty}\bar{r}_n\geq v(z_1)\geq E_{\s,\tau_\ve}\limsup_{n\to \infty}\overline{r}_n-\ve \;\;\;\forall \s, \tau.\end{equation}

It is known that all absorbing games \cite{koh74,merneyros09} and, more generally,  all finite stochastic games \cite{merney81,merney82} have a value.

A strategy $\s_\ve$ that satisfies the left-hand inequality (\ref{sgeoptimaluniform})  is called {\em uniform  $\ve$-optimal}.
A strategy $\s_\ve$ that satisfies the left-hand inequality  (\ref{sgeoptimallaverage})  is called   {\em limiting-average $\ve$-optimal}.

A strategy $\s_\ve$ that satisfies both left-hand inequalities  (\ref{sgeoptimaluniform}) and (\ref{sgeoptimallaverage})  is called {\em $\ve$-optimal}.

\subsection{Memory-based strategies} A {\em memory-based strategy} $\s$ generates a random sequence of memory states $m_1,\ldots,m_t,m_{t+1},\ldots$, where at each stage $t$, the memory state $m_t$ is updated stochastically according to a distribution that depends only on the current stage $t$ and the current game state $z_t$, as well as on the previous memory state $m_{t-1}$ and the action pair $(i_{t-1},j_{t-1})$. At each stage $t$, the action $i_t$ is chosen according to a distribution that depends only on the current time $t$, the current memory state $m_t$, and the current game state $z_t$. Explicitly, the conditional distribution of  $i_t$, given $h^m_t:=(z_1,m_1,i_1,j_1,\ldots, z_t,m_t)$, is a function $\s_\alpha$ of $(t,z_t,m_t)$  and the conditional distribution of $m_{t+1}$, given $(h^m_t,i_t,j_t,z_{t+1})$, is a function $\s_m$ of $(t,m_t,i_t,j_t,z_{t+1})$ (i.e., it depends on just the time $t$ and the tuple $(m_t,i_t,j_t,z_{t+1})$).

A memory-based strategy $\s$ is  {\em clock-independent} if its action-selection function $\s_\alpha$ and memory-update function $\s_m$ do not depend on the stage number $t$.

A natural question is the existence of memory-based strategies where the number of distinct memory states used in the first $n$ states grows slowly with high probability.

A {\em public-memory strategy} is a memory-based strategy in which, after each update, the new memory state $m_t$ is publicly revealed, allowing the opponent to base their choice of action at stage $t$ on $m_t$. Such a strategy of the other player is an $(\overline{m}_t)$-based strategy, where $\overline{m}_t=(z_1,m_1,i_1,j_1,\ldots, z_t,m_t)$.

A {\em memory-process} for a stochastic game is an $\mathbb{N}$-valued stochastic process $(m_t)_{t=1}^\infty$ where each memory state $m_t$ is updated stochastically based on past play.
The initial memory state $m_1$ is (w.l.o.g.) $0$ and the conditional distribution of $m_{t+1}$ given $(h^m_t,i_t,j_t,z_{t+1})$ is  a function $\s_m$ of $(t,m_t,i_t,j_t,z_{t+1})$ (i.e., it depends on just the time $t$ and the tuple $(m_t,i_t,j_t,z_{t+1})$).
One could instead allow the conditional distribution of $m_{t+1}$ to also depend on $z_t$. However, since the number of game states is finite, $z_t$  can be encoded into the memory state $m_t$ without affecting the results of this paper.

A {\em stationary memory-process} is a memory process  $(m_t)_{t=1}^\infty$ where the memory updating function is independent of $t$.

An  $(m_t)_{t=1}^\infty$-based strategy chooses the action at stage $t$ as a function of $(t,z_t,m_t)$.
Given an $(m_t)_{t=1}^\infty$-based strategy $\sigma$ of Player 1 and an $(\overline{m}_t)_{t=1}^\infty$-based strategy $\tau$ of Player 2,  the induced probability distribution over plays and memory sequences is denoted by $P_{\s,\tau}$, and the expectation with respect to $P_{\s,\tau}$ is denoted by $E_{\s,\tau}$.

\subsection{Public vs. Private Memory in Stochastic Games}\label{sec:pubvspri}

A memory-based strategy is one where the choice of actions depends not only on the current game state but also on a memory state that is updated as the game progresses. A key distinction arises between \textbf{public-memory strategies} and \textbf{private-memory strategies}.

\paragraph{\bf Definition.}
A strategy of a player uses \emph{public memory} if the opponent have access to its memory state---either through explicit revelation or implicit derivation. This access enables the opponent to condition their choice of action on the player's memory state. Conversely, a strategy uses \emph{private memory} if the opponent does not have access to its memory state.

Public-memory is an unusual property for strategies. Unlike properties such as finite-memory, Markov, or deterministic strategies—each of which imposes constraints or benefits on the player following the strategy—public-memory does not directly affect the player's own strategic options. Instead, it expands the \textbf{opponent’s} ability to respond by allowing them to condition their actions on the player’s memory state. This difference has significant implications in adversarial settings.

\paragraph{\bf Implications of Public vs. Private Memory.}
The distinction between public and private memory strategies affects security, robustness, and practical implementations in applications where strategies must be executed over long time horizons.

In real-world applications, strategies may be implemented in computer systems that are designed to run indefinitely in uncertain environments. A common best practice in computer science is to use backup systems distributed across multiple locations to ensure resilience against failures. However, whether these backups enhance or weaken security depends on whether the underlying strategy relies on public or private memory:
\begin{itemize}
    \item If a strategy is a \emph{private-memory strategy}, then having backups can \textbf{weaken security} in adversarial settings. An opponent only needs access to \emph{any} copy of the memory state (from the main system or a backup) to exploit weaknesses in the strategy.
    \item If a strategy is a \emph{public-memory strategy}, then backups can \textbf{enhance security}. If the system synchronizes to the most common memory state across all backups, an adversary must compromise  \emph{multiple} systems to disrupt the strategy.
\end{itemize}
Additionally, if an adversary only has \emph{read access} to the system's memory state, then public-memory strategies provide no additional risk, as long as they remain well-designed.

\paragraph{\bf Connection to Public Randomness.}
The concept of public memory is related to \textbf{public randomness} in communication complexity. One can think of public memory as using public randomness to select memory states. However, there are notable differences:
\begin{enumerate}
    \item In a public-memory strategy, public randomness is used only for memory updating, whereas in a {\em fully public-random strategy}, all randomness in the system is governed by public signals. In stochastic games, if actions were selected solely based on public randomness, strategies would effectively become deterministic—insufficient even in simple cases such as rock-paper-scissors.
    \item In communication complexity, public randomness is typically a cooperative tool that enhances efficiency, allowing players to coordinate their strategies. In stochastic games, however, public memory provides the opponent with additional information, expanding their strategic options and making near-optimal play more challenging. On the other hand, public memory can be easier to implement, as it does not require securing hidden internal memory states.
\end{enumerate}

The distinction between public and private memory is fundamental to the design of near-optimal strategies in stochastic games. While in the Big Match there exist finite-private-memory strategies that are near-optimal \cite{hanibsney23}, Theorem \ref{thm:limsupEworthless} shows that any finite-public-memory strategy of Player 1 in the Big Match is worthless. While \cite{hanibsney23} establishes the existence of finite-private-memory near-optimal strategies in the Big Match, it remains an open problem whether such strategies exist in all stochastic games.

\paragraph{\bf Connection to Extensive Form Correlated Equilibrium.}
In non-zero-sum stochastic games---which are outside the scope of this paper---public-memory processes enable players to coordinate their actions over time, facilitating reciprocity and potentially leading to more cooperative outcomes. This is reminiscent of the role of public signals in extensive-form correlated equilibria, where players condition their strategies on shared information to achieve higher payoffs. However, in zero-sum settings, public memory does not provide a coordination advantage but instead gives the opponent additional information, expanding their strategic options.

\section{The main result}

\begin{thm}\label{maintheorem}
Let $\Gamma = \langle Z, I, J, r, q \rangle$ be any stochastic game. For every $\varepsilon > 0$, there exist:
\begin{itemize}
    \item a memory process $(m_t)$ with \textbf{stationary probabilistic} memory updating,
    \item an $(m_t)$-based strategy $\sigma$ of Player~1 with \textbf{time-independent action selection},
    \item constants $K_\varepsilon = O(1/\varepsilon)$ and $n_\varepsilon > 0$,
\end{itemize}
such that for \textbf{every} $(\overline{m}_t)$-based strategy $\tau$ of Player~2, the following hold:
\begin{enumerate}
    \item[(a)] Uniform $\varepsilon$-optimality:
    \begin{equation}\label{ine:uniforme}
    \gamma_n(\sigma, \tau) := E_{\sigma, \tau} \overline{r}_n \geq v(z_1) - \varepsilon \quad \forall n \geq n_\varepsilon.
    \end{equation}
    \item[(b)] High-probability memory bound:
    \begin{equation}\label{ine:probmax}
    P_{\sigma, \tau}\left( \max_{t \leq n} m_t \geq K_\varepsilon \log n \right) \leq n^{-2}.
    \end{equation}
    \item[(c)] Almost-sure asymptotic memory bound:
    \begin{equation}\label{ine:problimsup}
    \limsup_{n \to \infty} \frac{\max_{t \leq n} m_t}{\log n} \leq K_\varepsilon \quad \text{almost surely under } P_{\sigma, \tau}.
    \end{equation}
    \item[(d)] High-probability uniform bound:
    \begin{equation}\label{ine:highprobmax}
    P_{\sigma, \tau} \left( \exists n : m_n > n_\varepsilon + K_\varepsilon \log n \right) < \varepsilon.
    \end{equation}
\end{enumerate}
\end{thm}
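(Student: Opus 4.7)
\emph{Proof sketch.}
The plan is to refine the Mertens--Neyman construction by encoding their continuously varying auxiliary parameter $\lambda_t \in (0,1]$ through an integer-valued memory state on an exponentially spaced grid, and by making the update rule stationary. Recall that the M--N strategy plays an optimal action of the $\lambda_t$-discounted game at state $z_t$, where $\lambda_t = \lambda(L_t)$ for a real-valued ``accumulated deviation'' process $L_t$ with bounded one-step increments and zero conditional drift up to an $O(\lambda_t)$ correction. In $n$ stages $L_t$ takes on $\Theta(n)$ distinct values, giving the linear M--N memory bound.

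My proposed encoding is $\lambda_t = c\,\alpha^{-m_t}$ for integer memory states $m_t \in \mathbb{N}$, with base $\alpha > 1$ and scale $c > 0$ chosen in terms of $\varepsilon$. The action-selection function is then the stationary rule $\sigma_\alpha(z, m) = \varphi^{c\alpha^{-m}}(z)$, where $\varphi^\lambda$ denotes an optimal strategy of the $\lambda$-discounted game. The stationary memory update $\sigma_m(m, i, j, z')$ is a biased random walk on $\mathbb{N}$: from state $m$, the next state is $m+1$, $m-1$, or $m$, with probabilities chosen so that the expected one-step change of $m_t$ reproduces---up to higher-order terms---the logarithmic drift of $1/\lambda_t$ in the original M--N process. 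Because the nominal step size of $\lambda_t$ in M--N is of order $\lambda_t = c\alpha^{-m_t}$, the transition probability out of state $m$ is of order $\alpha^{-m}$, so $m_t$ evolves as a slowly varying process whose tail distribution decays super-geometrically.

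For the probabilistic bounds (b)--(d), I would construct a suitable exponential (super)martingale in $m_t$ and apply a Doob-type maximal inequality, exploiting the $\alpha^{-m}$ transition rate to obtain $P_{\sigma,\tau}(\max_{t\leq n} m_t \geq K_\varepsilon \log n) \leq n^{-2}$ for $K_\varepsilon$ sufficiently large in terms of $\alpha$, which is (\ref{ine:probmax}). Summability of $n^{-2}$ combined with Borel--Cantelli then yields (\ref{ine:problimsup}), and a union bound on the same tail event from some $n_\varepsilon$ onward, combined with a transient-phase estimate, yields (\ref{ine:highprobmax}).

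The main obstacle is verifying uniform $\varepsilon$-optimality, part (a). The M--N optimality argument rests on exhibiting a near-supermartingale built from cumulative payoffs, the discounted values $u^{\lambda_t}(z_t)$, and the increments of $\lambda_t$, and then applying optional stopping to bound $E_{\sigma, \tau} \overline{r}_n$ from below by $v(z_1) - \varepsilon$. Replacing the continuous update by a discrete jump on the $\alpha^{-m}$-grid introduces per-stage discretization errors of order $\alpha^{-m_t}$ in the drift, because $\lambda_t$ is now changed in geometric rather than infinitesimal increments. These errors are summable in expectation under the modified dynamics, since the probability of being at memory state $k$ decays geometrically in $k$; by taking $\alpha$ close to $1$ and $c$ small, the total discretization error can be made $O(\varepsilon)$. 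The technical heart of the proof is thus a careful re-derivation of the M--N near-supermartingale estimate in the presence of these controlled geometric discretization errors.
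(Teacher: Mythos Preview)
Your overall architecture matches the paper's: memory states index a geometric grid of discount rates, a biased nearest-neighbour random walk on $\mathbb{N}$ updates the memory, the action played is optimal in the current discounted game, and uniform $\varepsilon$-optimality is established via an M--N-style near-submartingale. The memory-bound argument is also close in spirit, though the paper actually uses something simpler than an exponential martingale with Doob: since $s_i:=\gamma^{m_i}M$ already \emph{is} the exponential of $m_i$, one just shows $E[s_{i+1}-s_i\mid\mathcal F_i]\le 2$, applies Markov's inequality to each $s_i$, and takes a union bound over $i\le n$.

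There is, however, a genuine gap in your choice $\lambda_t=c\,\alpha^{-m_t}$, which in the $s$-variable is $\lambda(s)=c/s$. The paper instead takes $\lambda(s)=1/(s\ln^2 s)$, and the extra logarithmic factor is not cosmetic: it makes $\lambda$ integrable on $[M,\infty)$, so the potential $\Phi(s)=\int_s^\infty\lambda=1/\ln s$ is \emph{bounded}. The M--N-type submartingale is $Y_i=v_{\lambda_i}(z_i)-\Phi(s_i)$, and the proof of uniform $\varepsilon$-optimality needs both $Y_i\le 1$ (to bound $E\sum_i\lambda_i$) and $v_{\lambda_i}(z_i)\ge Y_i-\text{(bounded)}$ (to conclude $Ev_{\lambda_i}(z_i)\ge v(z_1)-\varepsilon/8$ uniformly in $i$). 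With your $\lambda(s)=c/s$ the only potential satisfying $\Phi'=-\lambda$ is $\Phi(s)=-c\ln(s/M)+\text{const}$, which is unbounded; whichever sign convention you adopt, one of the two required inequalities fails. Concretely, if you set $Y_i=v_{\lambda_i}(z_i)+c\ln(s_i/M)$ to make the submartingale inequality go through, then $Ev_{\lambda_i}(z_i)=EY_i-cE\ln(s_i/M)$, and $E\ln(s_i/M)\asymp (\ln\alpha)\,E m_i$ grows like $\log i$, destroying the uniform-in-$i$ lower bound. Your remark that ``the probability of being at memory state $k$ decays geometrically in $k$'' is true for each fixed stage but does not control this drift. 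The fix is exactly what the paper does: take $\lambda(s)$ integrable (any $1/(s\log^{1+\eta}s)$ with $\eta>0$ works game-independently), so that the potential is bounded and the submartingale argument closes.
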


Note that $n_\varepsilon + K_\varepsilon \log n = O(\log n)$ for fixed $\varepsilon$. Since memory states are indexed by nonnegative integers, the number of distinct memory states used in the first $n$ stages is at most $1 + \max_{t \leq n} m_t$. Thus, inequality~\eqref{ine:highprobmax} implies the existence of a function $f(n) = O(\log n)$ such that, with probability at least $1 - \varepsilon$, the number of memory states used in the first $n$ stages never exceeds $f(n)$.

Allowing the memory updating to be {\em time-dependent} enables a uniform $\ve$-optimal strategy that uses no more than $1+K_\ve\ln n$ memory states in the first $n$ stages. This leads to the following result:

\begin{cor}[Time-dependent memory updating]
For every stochastic game $\Gamma=\langle Z, I,J, r, q\rangle$
and every positive number $\ve>0$, there is a memory-process $(m_t)$ with time-dependent memory updating and an $(m_t)$-based strategy $\s$ with time-independent choice of actions and positive numbers  $K_{\ve}=O(\frac{1}{\ve})$ and $n_{\ve}$ 
{such that inequality (\ref{ine:uniforme}) holds for every $(\overline{m}_t)$-based strategy $\tau$ of Player~2, and}

\begin{equation}\label{ine:memorymax}
m_n\leq K_\ve \ln n \;\;\;\forall n\geq 1.
\end{equation}
\end{cor}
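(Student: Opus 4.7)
The plan is to truncate the memory process of Theorem~\ref{maintheorem} at a deterministic, slowly growing threshold, and to control the resulting discrepancy via a coupling argument. Apply Theorem~\ref{maintheorem} with parameter $\varepsilon':=\varepsilon/(1+2R)$, where $R:=\max_{z,i,j}|r(z,i,j)|$, to obtain a stationary process $(m_t)$, a strategy $\sigma$ with time-independent action selection, and constants $K_{\varepsilon'}=O(1/\varepsilon)$ and $n_{\varepsilon'}$ satisfying (a)--(d). Choose $K_\varepsilon=O(1/\varepsilon)$ large enough that $L(t):=\lfloor K_\varepsilon\ln t\rfloor$ satisfies $L(t)\ge n_{\varepsilon'}+K_{\varepsilon'}\log t$ for every $t\ge 2$; the stage $t=1$ is harmless because $m_1=L(1)=0$ by convention. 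Define a new, time-dependent memory process $(m'_t)$ by applying the stationary update function of $(m_t)$ whenever the proposed successor state is at most $L(t+1)$, and otherwise redirecting the memory to the sink state $0$. Let $\sigma'$ be the $(m'_t)$-based strategy obtained by reusing the time-independent action-selection function of $\sigma$. By construction, $m'_n\le L(n)\le K_\varepsilon\ln n$ deterministically for every $n\ge 1$, establishing~(\ref{ine:memorymax}).

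For the payoff guarantee, fix any $(\overline{m}'_t)$-based strategy $\tau$ of Player~2 and lift it to an $(\overline{m}_t)$-based strategy $\tau'$ of Player~2 in the original game by $\tau'(\overline{m}_t):=\tau(\phi(\overline{m}_t))$, where $\phi$ is the deterministic map sending a memory history to its truncated counterpart. Couple the two games on a common probability space by driving state transitions, action randomizations, and memory updates with the same external randomness, and let $T:=\inf\{t:m_t>L(t)\}$ denote the first truncation time in the original game. A routine induction on $t$ shows that on the event $\{T>n\}$ the $n$-stage plays of the two coupled games coincide, so
\[
\bigl|E_{\sigma',\tau}\overline{r}_n-E_{\sigma,\tau'}\overline{r}_n\bigr|\le 2R\,P_{\sigma,\tau'}(T<\infty)\le 2R\varepsilon',
\]
where the last inequality combines the choice of $L$ with Theorem~\ref{maintheorem}(d). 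Together with the uniform $\varepsilon'$-optimality of $\sigma$, this yields $E_{\sigma',\tau}\overline{r}_n\ge v(z_1)-(1+2R)\varepsilon'=v(z_1)-\varepsilon$ for every $n\ge n_{\varepsilon'}$, which is~(\ref{ine:uniforme}).

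The main obstacle is a calibration issue: keeping $K_\varepsilon=O(1/\varepsilon)$ while enforcing $K_\varepsilon\ln t\ge n_{\varepsilon'}+K_{\varepsilon'}\log t$ for every $t\ge 2$ requires that the additive transient $n_{\varepsilon'}$ be absorbable into the multiplicative constant; in cases where $n_{\varepsilon'}$ grows too quickly in $1/\varepsilon$ for this to be immediate, a bounded initial segment with frozen memory, or a small index shift in $L$, handles the first few stages without affecting the asymptotic bound. Everything else is automatic: both the action-selection rule of $\sigma$ and the lifting $\phi$ are deterministic, so the two coupled game trajectories agree identically until time $T$, and the payoff discrepancy collapses to the simple bound $2R$ times the truncation probability, which Theorem~\ref{maintheorem}(d) keeps below $\varepsilon'$.
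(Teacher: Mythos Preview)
Your approach---truncate the Theorem~\ref{maintheorem} memory process at a deterministic logarithmic threshold and control the discrepancy by coupling---is exactly what the paper has in mind. The paper does not spell out a proof of the corollary; it only remarks that ``allowing the memory updating to be time-dependent enables a uniform $\varepsilon$-optimal strategy that uses no more than $1+K_\varepsilon\ln n$ memory states,'' leaving the reader to infer the truncation-plus-coupling argument. Your write-up makes that inference explicit and correctly reduces the payoff error to $2R$ times the truncation probability, which Theorem~\ref{maintheorem}(d) bounds.

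Two points warrant cleaning up. First, the map $\phi$ is not quite well-defined as you describe it: once a truncation occurs, the truncated memory evolves from the sink state according to the stochastic update rule, and this evolution is not a deterministic function of the \emph{original} memory history alone. The simple fix is to define $\tau'(\overline{m}_t):=\tau(\overline{m}_t)$ whenever $\overline{m}_t$ is a valid truncated history (no memory exceeds the cap) and arbitrarily otherwise; this is all the coupling needs, since on $\{T>n\}$ the two histories are identical.

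Second---and you rightly flag this yourself---the calibration is a genuine gap, not a cosmetic issue. From the proof of Theorem~\ref{maintheorem}, $n_{\varepsilon'}=72/(\varepsilon'^2\lambda(M))$, which is in general much larger than $1/\varepsilon$, so requiring $K_\varepsilon\ln 2\ge n_{\varepsilon'}+K_{\varepsilon'}\ln 2$ forces $K_\varepsilon$ to be at least of order $n_{\varepsilon'}$, destroying the $O(1/\varepsilon)$ claim. Your proposed fixes (``frozen initial segment'' or ``index shift'') do not obviously restore $K_\varepsilon=O(1/\varepsilon)$: a frozen segment of length $N$ forces you to control $E[v(z_{N+1})]$ relative to $v(z_1)$, and an index shift $L(t)=K_\varepsilon\ln(t+c)$ no longer yields $m_n\le K_\varepsilon\ln n$ for small $n$. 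A cleaner route is to bypass part~(d) entirely and go back to the explicit tail bound $P(\max_{i\le n}m_i\ge k)\le(n^2-n+nM)/(\gamma^kM)$ from the proof of Lemma~1, combined with the trivial observation that $m_t\le t-1$ (so the cap cannot bind until $t-1>K_\varepsilon\ln t$, i.e., until $t$ is of order $K_\varepsilon\ln K_\varepsilon$); summing the tail bounds only over such $t$ can give the required small truncation probability with $K_\varepsilon=O(1/\varepsilon)$, though one must still check that this threshold exceeds $M$. The paper itself glosses over this tension between $n_\varepsilon$ and $K_\varepsilon$, so your concern is well placed.
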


Our strategy, like the Mertens–Neyman near-optimal strategy, instructs the player to act at each stage as if the discount rate were fixed, while dynamically adjusting this rate based on previous outcomes.

In our construction, the memory counter takes values in the set ${\gamma^i M : i \in \mathbb{N}}$, where $M$ is a sufficiently large constant and $\gamma > 1$ is a parameter depending on $\varepsilon$, chosen close to 1. At each stage, the counter is updated probabilistically based on the current stage outcome, with only a small chance of increase or decrease. This stochastic update rule is calibrated to match, in expectation, the update used in the Mertens–Neyman construction. However, unlike other possible stochastic memory schemes that could reduce memory usage even further, our construction is carefully tuned to preserve uniform $\varepsilon$-optimality. This balance is achieved by coupling the memory updates with a carefully designed function that maps counter values to discount rates. The full construction and analysis are presented in Section~\ref{sec:theproof}.

This approach opens the door to future refinements of memory-efficient strategies that trade off between probabilistic memory control and performance guarantees.

\section{The proof of the main result}
\label{sec:theproof}

Let $\Gamma=\langle Z, I,J, r, q\rangle$ be a stochastic game. Without loss of generality, assume that  $0\leq r(z,a)\leq 1$ for all states $z$ and action pairs $a$.

The $\lambda$-discounted game is the stochastic game where the total payoff is $\sum_{i=1}^\infty \lambda (1-\lambda)^{i-1}x_i$, where $x_i$ is the payoff at stage $i$, i.e., $x_i=r(z_i,a_i)$, where $z_i$ is the state at stage $i$ and $a_i$ is the action pair at stage $i$.

The value of  the $\lambda$-discounted game, as a function of the initial state $z$, exists \cite{sha53} and is denoted by $v_\lambda=(v_\lambda (z))_{z\in Z}$. Each player has, for each $0<\lambda<1$, a stationary strategy that is optimal in the $\lambda$-discounted game.

\textbf{Definition of the memory process:}
We define a memory-based strategy $\s=\s_{\ve,M,\lambda}$, which depends on three components: a precision parameter $\ve\in (0,1)$, a threshold $M>2$, and a discount-rate function $\lambda:(1,\infty)\to (0,1)$. 

We will prove that $\s_{\ve,M,\lambda}$, hereafter simply $\s$, satisfies the uniform $\ve$-optimality bound~\eqref{ine:uniforme}, as stated in Theorem~\ref{maintheorem}.  

Set $\gamma=\gamma_\ve=1+\ve/9$, and note that $\ln \gamma_\ve=O(\ve)$. The set of memories, which is the set $\mathbb{N}$ of nonnegative integers, is identified with the set of nonnegative numbers $S\!:=\{\gamma^k M: k \in \mathbb{N}\}$ via the bijective  map $k\mapsto \gamma^k M$.

Let $m_i$ be the memory at the beginning of stage $i$, and let $m_1=0$. Set $s_i=\gamma^{m_i}M$ and $\lambda_i=\lambda(s_i)$.

The stationary memory process $(m_t)$ evolves adaptively, increasing or decreasing based on the deviation of the observed payoff from the discounted game value. The memory updating is stochastic. Unlike earlier near-optimal strategies that use deterministic memory updating, our approach leverages stochastic memory updating, which plays a crucial role in reducing the number of memory states used.

The stationary memory process $(m_t)$ is such that
\[s_{i+1}\!\in \!\{\gamma s_i, s_i, \gamma^{-1}s_i\}\cap S,\]
\[s_{i+1}\geq s_i\; \text{ whenever }\; x_i-v_{_{\lambda_i}}(z_{i+1})+{\ve/2}\geq 0,\;
\text{ and }\]
\[s_{i+1}\leq s_i\; \text{ whenever }\; x_i-v_{_{\lambda_i}}(z_{i+1})+{\ve/2}\leq 0.\]

The stochastic law of the memory process $(m_t)$ is defined by the conditional probability of $s_{i+1}$ given $z_1,s_1,a_1,\ldots,z_i,s_i,a_i,z_{i+1}$, which is a function of only the triple $(s_i,x_i=r(z_i,a_i),z_{i+1})$.
\[P_{\s} (s_{i+1}=\gamma s_i\mid s_i,x_i,z_{i+1})=\frac{x_i-v_{_{\lambda_i}}(z_{i+1})+{\ve/2}}{s_i(\gamma -1)} \cdot 1_{\{x_i-v_{_{\lambda_i}}(z_{i+1})+{\ve/2}>0\}},\]

\[P_{\s} (s_{i+1}= \gamma^{-1}s_i \mid s_i,x_i,z_{i+1})=\frac{x_i-v_{_{\lambda_i}}(z_{i+1})+{\ve/2}}{s_i(\gamma^{-1} -1)}
\cdot 1_{\{ x_i-v_{_{\lambda_i}}(z_{i+1})+{\ve/2}<0\}}\cdot 1_{\{s_i>M\}}, \]
and (therefore)
\begin{equation*}
\begin{split}
P_{\s} (s_{i+1}= s_i \mid s_i,x_i,z_{i+1})&= 1- P_{\s} (s_{i+1}=\gamma s_i\mid s_i,x_i,z_{i+1})\\ &\quad - P_{\s} (s_{i+1}= \gamma^{-1}s_i \mid s_i,x_i,z_{i+1}).
\end{split}
\end{equation*}

This completes the definition of the memory process $(m_t)$, which, by construction, is a stationary memory process.

\begin{tcolorbox}[title=Probabilistic Memory Updating Reduces Memory Usage]
The probabilistic memory updating controls memory growth by ensuring that, at each stage, there is only a small probability of transitioning to a new, previously unused memory state. While alternative stochastic update rules can reduce memory usage even further, our construction is carefully tuned to achieve two important goals: logarithmic memory usage \emph{and} uniform $\varepsilon$-optimality. This balance is made possible by coupling the memory updates with a function that maps counter values to discount rates in a tightly calibrated way.

Conceptually, the probabilistic memory updating that enables reduced memory usage
resembles the classic technique of approximate counting~\cite{M78,F85}, albeit with notable differences. One main such difference is that our memory states can both increase and decrease.
\end{tcolorbox}

\textbf{The definition of the memory-based strategy:}
The $(m_t)$-based strategy $\s$ plays at stage $i$ an optimal strategy in the $\lambda_i:=\lambda(s_i)$-discounted game.
This completes the definition of the $(m_t)$-based strategy $\s_{\ve,M,\lambda}$.
By construction, the choice of actions of the $(m_t)$-based strategy $\s_{\ve,M,\lambda}$ is time independent.

\textbf{Bounding memory usage:}
First, we show that the strategy $\s$ uses a small number of memory states.

Fix an $(\overline{m}_i)$-based strategy $\tau$ of player 2. Lemma 1 below shows that inequality (\ref{ine:probmax}) holds, and Lemma 2 below shows that inequality
(\ref{ine:problimsup}) holds.

The probability distribution that is defined by $\s$ and $\tau$ on plays and memories is denoted by $P_{\s,\tau}$, or $P$ for short. The expectation w.r.t. $P_{\s,\tau}$ is denoted by $E_{\s,\tau}$, or $E$ for short.

Let $C$ be a sufficiently large constant and let $K_\ve$ be such that
\[\frac{C}{\ln \gamma}\geq K_\ve \geq \frac{4}{\ln \gamma}.\]
It follows that $K_\ve=O(1/\ve)$.

\begin{lem}For every $(\overline{m}_i)$-based strategy $\tau$ of Player 2 and for
all integers $n\geq M>2$,
\[P_{\s,\tau}(\max_{i=1}^n m_i\geq K_\ve \ln n)\leq \frac{1}{n^2}.\]

\end{lem}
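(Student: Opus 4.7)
My plan is to reduce the lemma to a first-passage estimate for the supermartingale $Y_i := s_i - ci$, where $c := 1 + \ve/2$. The starting observation is that because $x_i$ and $v_{\lambda_i}(z_{i+1})$ both lie in $[0,1]$, the drift term $d_i := x_i - v_{\lambda_i}(z_{i+1}) + \ve/2$ satisfies $|d_i| \leq c$. Combining this with the three displayed transition probabilities for $s_{i+1}$ (and checking that they sum to at most $1$, which is where the implicit assumption $M = \Omega(1/\ve)$ enters), a direct calculation---in which the multiplicative jump $(\gamma^{\pm 1} - 1)s_i$ cancels exactly with the denominator of the corresponding transition probability---yields
$$E_{\s,\tau}[s_{i+1} - s_i \mid s_i, x_i, z_{i+1}] \;=\; d_i \cdot \mathbf{1}_{\{d_i > 0\} \cup \{s_i > M\}}.$$
Integrating out $(x_i, z_{i+1})$ gives $E_{\s,\tau}[s_{i+1} - s_i \mid \mathcal{F}_i] \leq c$, so $Y_i$ is a supermartingale under $P_{\s,\tau}$ with respect to the natural filtration of $(z_i, m_i, i_i, j_i)$.

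Next, I introduce the level $K^* := \lceil K_\ve \ln n \rceil$ and the stopping time $T := \inf\{i \geq 1 : m_i \geq K^*\}$, so that the event in the lemma is exactly $\{T \leq n\}$. Applying optional sampling to the bounded stopping time $T \wedge n$, and using $s_1 = M$, gives
$$E_{\s,\tau}[s_{T \wedge n}] \;=\; E_{\s,\tau}[Y_{T \wedge n}] + c\,E_{\s,\tau}[T \wedge n] \;\leq\; E_{\s,\tau}[Y_1] + cn \;\leq\; M + cn.$$
On the other hand, $s_{T \wedge n} \geq 0$ and, by the definition of $T$, $s_{T \wedge n} \geq M\gamma^{K^*}$ on $\{T \leq n\}$, so
$$M\gamma^{K^*} \cdot P_{\s,\tau}(T \leq n) \;\leq\; E_{\s,\tau}\bigl[s_{T \wedge n}\,\mathbf{1}_{\{T \leq n\}}\bigr] \;\leq\; E_{\s,\tau}[s_{T \wedge n}] \;\leq\; M + cn.$$

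Finally, the defining inequality $K_\ve \geq 4/\ln \gamma$ gives $\gamma^{K^*} \geq \gamma^{K_\ve \ln n} = n^{K_\ve \ln \gamma} \geq n^4$, so, using $n \geq M > 2$ and $c \leq 2$,
$$P_{\s,\tau}(T \leq n) \;\leq\; \frac{M + cn}{M n^4} \;\leq\; \frac{(1+c)n}{Mn^4} \;\leq\; \frac{1}{n^3} \;\leq\; \frac{1}{n^2}.$$
I expect the only mildly subtle step to be the first-paragraph computation: one must verify that the three conditional transition probabilities are simultaneously valid (nonnegative and summing to at most $1$), which forces $M$ to be taken sufficiently large in $1/\ve$, and then exploit the exact cancellation between the denominators and the multiplicative jumps to obtain the clean drift identity. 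Once this supermartingale property is in hand, the proof reduces to optional stopping and a one-line arithmetic check.
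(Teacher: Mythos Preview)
Your proof is correct and takes a genuinely different route from the paper. Both arguments begin with the same drift bound: from the transition law one gets $E_{\s,\tau}[s_{i+1}-s_i\mid s_i,x_i,z_{i+1}]\le c$ for a constant $c\le 2$ (the paper writes this simply as $E(s_{i+1}-s_i\mid\mcF_i)\le 2$). From there the paper applies Markov's inequality to each $s_i$ separately, obtaining $P(s_i\ge \gamma^k M)\le (2(i-1)+M)/(\gamma^k M)$, and then a union bound over $i\le n$, which costs an extra factor of $n$ and yields the stated $1/n^2$. You instead package the drift as the supermartingale $Y_i=s_i-ci$ and apply optional stopping at the first hitting time of the level $K^*$, which controls the maximum directly without a union bound and gives the sharper $O(1/n^3)$. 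Your route is a little cleaner and loses one less power of $n$; the paper's route is slightly more elementary (no stopping-time machinery).

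Two small remarks. First, the step $\frac{(1+c)n}{Mn^4}\le \frac{1}{n^3}$ needs $M\ge 1+c$; with $c\le 2$ and the hypothesis $M>2$ read as a real constraint this can fail by a factor $\le 3/2$, but since $n\ge 3$ the final bound $\le 1/n^2$ is unaffected. Second, your observation that the transition probabilities are only well-defined when $M=\Omega(1/\ve)$ is correct and is an implicit assumption the paper also relies on (it later takes $M$ very large); it is a constraint on the lemma's hypotheses rather than on your argument.
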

\begin{proof}
The stochastic law of $s_i$ guarantees that
\[s_{i+1}-s_i>0 \implies s_{i+1}-s_i=s_i(\gamma -1)\]
and (using the inequality $x_i-v_{\lambda_i}(z_i)+\ve/2\leq 2$)
\[P_{\s,\tau}(s_{i+1}-s_i>0\mid z_1,s_1,i_1,j_1,\ldots, z_i,s_i)\leq \frac{2}{s_i(\gamma-1)}.\]
Therefore,
\[E_{\s,\tau}(s_{i+1}-s_i\mid z_1,s_1,i_1,j_1,\ldots, z_i,s_i)\leq 2.\]

Therefore, as the expectation equals the expectation of the conditional expectation, $E_{\s,\tau}(s_{i+1}-s_i)\leq 2$. Therefore,  $E_{\s,\tau}s_{i+1}\leq 2i+M$. The random variable $s_i$ is nonnegative. Therefore, by Markov's inequality,
\[P_{\s,\tau}(s_i\geq \gamma^k M)\leq \frac{E_{\s,\tau}s_i}{\gamma^k M}\leq \frac{2(i-1)+M}{\gamma^k M}.\]
Therefore, for every positive integer $k$,
\begin{eqnarray*}P_{\s,\tau}(\max_{i=1}^n m_i\geq k)
&=&P_{\s,\tau}(\max_{i=1}^n s_i\geq \gamma^kM)
= P_{\s,\tau}(\exists i\leq n \text{ s.t. } s_i\geq \gamma^k M)\\
&\leq &\sum_{i=1}^nP_{\s,\tau}(s_i\geq \gamma^kM) \leq \sum_{i=1}^n\frac{2(i-1)+M}{\gamma^k M}\\
&=&\frac{n^2-n+nM}{\gamma^k M}.
\end{eqnarray*}
Therefore, for $n\geq M>2$, we have
\begin{equation*}
P_{\s,\tau}(\max_{i=1}^n m_i\geq k)\leq \frac{2n^2}{\gamma^kM}\leq \frac{n^2}{\gamma^k}.
\end{equation*}
Hence, by letting $k_n$ be the smallest integer that is $\geq K_\ve\ln n$, for all $n\geq M>2$, we have
\begin{eqnarray*}P_{\s,\tau}(\max_{i\leq n}m_i\geq K_\ve\ln n)&=&P_{\s,\tau}(\max_{i\leq n}m_i\geq k_n)\leq \frac{n^2}{\gamma^{k_n}}\\ &\leq& \frac{n^2}{\gamma^{K_\ve\ln n}}\leq  \frac{n^2}{ e^{\ln \gamma(4/\ln \gamma)\ln n }}= \frac1{n^2}.
\end{eqnarray*}
\end{proof}

\begin{lem}
\[P_{\s,\tau}\left(\limsup_{n\to\infty}\frac{\max_{i\leq n}m_i}{\ln n}\leq K_{\ve}\right)=1.\]
\end{lem}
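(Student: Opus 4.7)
The plan is to deduce Lemma 2 directly from Lemma 1 via a standard Borel--Cantelli argument. For each integer $n\geq M$, let $A_n$ denote the event $\{\max_{i\leq n} m_i \geq K_\varepsilon \ln n\}$. Lemma 1 gives $P_{\sigma,\tau}(A_n)\leq 1/n^2$, and since $\sum_{n\geq M} 1/n^2 < \infty$, the first Borel--Cantelli lemma yields that, $P_{\sigma,\tau}$-almost surely, only finitely many of the events $A_n$ occur.

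On the complement of $\limsup_n A_n$, there exists a (random) index $N$ such that for every $n\geq N$ we have $\max_{i\leq n} m_i < K_\varepsilon \ln n$, hence $\max_{i\leq n} m_i / \ln n < K_\varepsilon$. Taking the $\limsup$ as $n\to \infty$ gives
\[
\limsup_{n\to\infty}\frac{\max_{i\leq n} m_i}{\ln n}\leq K_\varepsilon
\]
on a set of full $P_{\sigma,\tau}$-measure, which is precisely the claim of Lemma 2.

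There is no serious obstacle: the non-trivial probabilistic work was already done in Lemma 1, which provides the summable tail bound. The only thing to verify carefully is that Borel--Cantelli applies uniformly over the opponent's strategy $\tau$, but since Lemma 1 itself holds for every $(\overline{m}_i)$-based strategy $\tau$, the conclusion transfers to $P_{\sigma,\tau}$ for every such $\tau$ as well.
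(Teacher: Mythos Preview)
Your proposal is correct and follows essentially the same approach as the paper: both invoke Lemma~1 to obtain the summable bound $P_{\sigma,\tau}(A_n)\leq n^{-2}$, apply the first Borel--Cantelli lemma, and conclude that the $\limsup$ of the ratio is almost surely at most $K_\varepsilon$. Your version is slightly more explicit about restricting to $n\geq M$ and about the uniformity in $\tau$, but the argument is otherwise identical.
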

\begin{proof} As $P_{\s,\tau}(\max_{i\leq n}m_i\geq K_\ve \ln n)\leq P_{\s,\tau}(\max_{i\leq n}m_i\geq \frac{4\ln n}{\ln \gamma})\leq n^{-2}$, and as $\sum_n n^{-2}<\infty$, the sum of probabilities converges. By the Borel-Cantelli lemma, almost surely under  $P_{\s,\tau}$, only finitely many values of $n$ satisfy $\frac{\max_{i\leq n}m_i}{\ln n}\geq K_\ve$. Consequently, we have $P_{\s,\tau}(\limsup_{n\to \infty}\frac{\max_{i\leq n}m_i}{\ln n}\leq K_\ve)=1$.
\end{proof}

\textbf{The map $\lambda$ from memories to discount rates:}
Fix $0<\ve<1/4$ and recall that $1<\gamma=\gamma_\ve=1+\ve/9$. The sufficiently large constant $M$ will be defined in the sequel.
Define the function $\lambda: (1,\infty)\to \mathbb{R}_+$ by
\[\lambda(s) =\frac{1}{s\ln^2 s}=\frac{1}{s(\ln s)^2}.\]

\begin{tcolorbox}[title=Choice of the Discount Rate Function $\lambda(s)$: Balancing Slow Decay and Integrability]
This choice ensures that the discount rate decays slowly enough to bound the differences \( v_{\lambda_{i+1}}(z) - v_{\lambda_i}(z) \) by a small multiple of \( \lambda_i \) (see inequality~\eqref{ineq:vlambdachange}), yet fast enough to ensure that the function \( s \mapsto \lambda(s) \) is integrable.

Any function of the form \( \lambda(s) = \frac{1}{s \log^{1+\eta} s} \) with \( \eta > 0 \), or even \( \lambda(s) = \frac{1}{s \log s \log \log^{1+\eta} s} \), satisfies these requirements and is independent of the underlying stochastic game. For each fixed stochastic game, one may also take \( \lambda(s) = \frac{1}{s^{1+\eta}} \), where \( \eta \) depends on the game’s structure.
\end{tcolorbox}






\textbf{Properties of the functions $s\mapsto\lambda$ and $\lambda\mapsto v_\lambda$:}
First, we list a few properties of the function $\lambda$. The function $\lambda$ is the derivative of the function $-1/\ln s$. Therefore,
\begin{equation}\label{eqn:slambdaintegration}
\frac{1}{\ln s}-\frac{1}{\ln s'}=\int_s^{s'}\lambda(s)\,ds.
\end{equation}
The function $\lambda$ is differentiable and its derivative at $s$ equals $-\lambda^2(s)(\ln^2 s+2\ln s)$. Therefore, using the inequality $2\ln s\leq \ln ^2 s\; \forall s\geq e^2$, we have
\begin{equation}\label{ine:dlambdadsbound}
|\frac{d\lambda}{ds}(s)|=\lambda^2(s)(2\ln s+\ln^2 s)\leq 2\lambda^2(s)\ln^2 s \;\; \forall s\geq e^2.
\end{equation}

Second, we derive a few properties of the function $\lambda\mapsto v_\lambda$.
The limit of $v_\lambda$ as $\lambda\to 0+$ exists by the result of Bewley and Kohlberg~\cite{bewkoh76}, and is denoted by $v$. The assumption that  $0\leq r\leq 1$ implies that $0\leq v_\lambda \leq 1$ and thus also $0\leq v \leq 1$. For $u\in \mathbb{R}^Z$, $\max_{z\in Z}|u(z)|$ is denoted by $\|u\|$.

The expansion, due to~\cite{bewkoh76}, of $v_\lambda$ as a convergent series in fractional powers of $\lambda$, implies the existence of positive numbers  $1>\lambda_0>0$, $K>2$, and $1\geq \beta>0$ such that $v_\lambda$ is differentiable in the interval  $(0,\lambda_0)$ and $\|\frac{dv_{\lambda}}{d\lambda}\|\leq K\lambda^{\beta-1}$ for every  $0<\lambda<\lambda_0$.
W.l.o.g. we assume that $\lambda_0< 1/K$. Hence,
\begin{equation}\label{ine:dvdlambdabound}\|\frac{dv_{\lambda}}{d\lambda}\|\leq \lambda^{\beta-1}/\lambda_0\;\;  \forall\;0<\lambda<\lambda_0.
\end{equation}

Fix such positive numbers  $1>\lambda_0>0$ and $1\geq\beta>0$.

We will establish inequalities (\ref{eqn:n2}), (\ref{eqn:vlambdav}), (\ref{eqn:Mlarge enotgh1}), and (\ref{eqn:n1}), which are used in proving that $\s_{\ve,M,\lambda}$ is uniform $\ve$-optimal.

The next result bounds the variation of the function $s\mapsto v_{\lambda(s)}$.

\begin{lem}\label{lem:variationv}There is a positive constant $M_1$ such that for all $s'\geq s\geq M_1$
\begin{eqnarray}\label{eqn:n2}
\|v_{\lambda(s)}-v_{\lambda(s')}\|&\leq &\frac{\ve (\gamma-1)}{\ln s}-\frac{\ve (\gamma-1)}{\ln s'}
= \frac{\ve^2/9}{\ln s}-\frac{\ve^2/9}{\ln s'}.
\end{eqnarray}
\end{lem}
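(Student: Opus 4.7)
The plan is to bound $\|v_{\lambda(s)}-v_{\lambda(s')}\|$ by an integral of $\|\tfrac{d}{du}v_{\lambda(u)}\|$ via the fundamental theorem of calculus applied componentwise, then show that, for $u$ beyond a suitable threshold $M_1$, this derivative is dominated by $\ve(\gamma-1)\lambda(u)$. Integrating the latter against $du$ gives exactly the right-hand side of (\ref{eqn:n2}) by (\ref{eqn:slambdaintegration}).

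First I would choose $M_1$ large enough that $u\geq e^2$ and $\lambda(u)=1/(u\ln^2 u)<\lambda_0$ for all $u\geq M_1$. This ensures that both inequality (\ref{ine:dvdlambdabound}) and inequality (\ref{ine:dlambdadsbound}) apply pointwise on $[M_1,\infty)$, so that $v_{\lambda(u)}$ is continuously differentiable in $u$. Applying the chain rule componentwise yields
\[
\Bigl\|\tfrac{d}{du}v_{\lambda(u)}\Bigr\|
\;\leq\; \Bigl\|\tfrac{dv_\lambda}{d\lambda}\Bigr|_{\lambda(u)}\Bigr\|\cdot|\lambda'(u)|
\;\leq\; \frac{\lambda(u)^{\beta-1}}{\lambda_0}\cdot 2\lambda(u)^2\ln^2 u
\;=\;\frac{2\,\lambda(u)^{\beta+1}\ln^2 u}{\lambda_0}.
\]

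Next I would simplify by substituting $\lambda(u)=1/(u\ln^2 u)$, which gives $\lambda(u)^{\beta+1}\ln^2 u = 1/(u^{\beta+1}\ln^{2\beta}u)$, and then factor out a single copy of $\lambda(u)$:
\[
\Bigl\|\tfrac{d}{du}v_{\lambda(u)}\Bigr\|
\;\leq\; \frac{2}{\lambda_0\,u^{\beta+1}\ln^{2\beta}u}
\;=\; \frac{2}{\lambda_0\,u^{\beta}\ln^{2\beta-2}u}\cdot \lambda(u).
\]
Because $\beta>0$, the factor $u^\beta\ln^{2\beta-2}u = u^\beta/\ln^{2-2\beta}u$ tends to infinity as $u\to\infty$, so by enlarging $M_1$ if necessary I can guarantee $2/(\lambda_0\,u^{\beta}\ln^{2\beta-2}u)\leq \ve(\gamma-1)$ for every $u\geq M_1$. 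Integrating over $[s,s']$ and invoking (\ref{eqn:slambdaintegration}) then gives
\[
\|v_{\lambda(s)}-v_{\lambda(s')}\|
\;\leq\; \ve(\gamma-1)\int_s^{s'}\lambda(u)\,du
\;=\;\ve(\gamma-1)\Bigl(\tfrac{1}{\ln s}-\tfrac{1}{\ln s'}\Bigr),
\]
and the identity $\ve(\gamma-1)=\ve^2/9$ delivers the second form of the bound.

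The only mildly delicate point is verifying that the constant $M_1$ can be chosen uniformly in $s,s'$ — but since all three conditions placed on $u$ ($u\geq e^2$, $\lambda(u)<\lambda_0$, and the power-vs-log inequality) are monotone in $u$ and depend only on the fixed quantities $\lambda_0,\beta,\ve$, this is automatic. Everything else is a routine chain-rule and integration calculation, so I expect no real obstacle beyond bookkeeping.
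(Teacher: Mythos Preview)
Your proposal is correct and follows essentially the same route as the paper: bound $\|\tfrac{d}{du}v_{\lambda(u)}\|$ via the chain rule using (\ref{ine:dvdlambdabound}) and (\ref{ine:dlambdadsbound}), show the resulting factor $\tfrac{2}{\lambda_0}\lambda(u)^{\beta}\ln^2 u$ is below $\ve(\gamma-1)$ for $u\ge M_1$, and then integrate against (\ref{eqn:slambdaintegration}). The paper's threshold condition $\tfrac{2(\ln s)^2}{s^\beta(\ln s)^{2\beta}}<\lambda_0\ve(\gamma-1)$ is exactly your inequality $\tfrac{2}{\lambda_0 u^\beta\ln^{2\beta-2}u}\le\ve(\gamma-1)$ after rearranging, so there is no substantive difference.
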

\begin{proof}
As 
$\beta>0$, $\frac{2(\ln s)^{2}}{s^\beta(\ln s)^{2\beta} }\to_{s\to\infty} 0$. This follows since the denominator grows faster than the numerator for any fixed $\beta>0$.
Let $M_1>\lambda_0^{-1}$ be a sufficiently large positive constant such that $M_1>e^2$ and
\begin{equation}\label{ineq:star1}\frac{2(\ln s)^{2}}{s^\beta(\ln s)^{2\beta} }<{\lambda_0}\ve(\gamma-1)\;\; \forall s\geq M_1.
\end{equation}

For $s\geq M_1$, $\lambda(s)<\lambda_0$ and $s>e^2$. Therefore, inequalities (\ref{ine:dvdlambdabound}), (\ref{ine:dlambdadsbound}),  and (\ref{ineq:star1}),
imply that

\begin{eqnarray*}\|\frac{dv_{\lambda (s)}}{ds}\|&\leq &\frac{\lambda^{\beta-1}(s)}{{\lambda_0}}|\frac{d\lambda}{ds}(s)|\leq 2\lambda(s)\lambda^{\beta}(s)(\ln s )^2/{\lambda_0}\\ &= & 
\frac{2(\ln s)^{2}}{s^\beta (\ln s)^{2\beta} }\lambda(s)/\lambda_0<{\lambda_0}\ve(\gamma-1)\lambda(s)/{\lambda_0}\\
&=& \ve(\gamma -1)\lambda (s).
\end{eqnarray*}
Therefore, for $s'\geq s\geq M_1$, we have,
\begin{eqnarray*}\|v_{\lambda(s)}-v_{\lambda(s')}\|&\leq &\int_s^{s'}\|\frac{dv_{\lambda(s)}}{ds}\|\,ds\\
&\leq & \ve(\gamma-1)\int_s^{s'}\lambda(s)\,ds=\frac{\ve (\gamma-1)}{\ln s}-\frac{\ve (\gamma-1)}{\ln s'},
\end{eqnarray*}
which completes the proof of the lemma.
\end{proof}

We continue with the derivation of a few inequalities of various functions of $s$. Recall that the function $s\mapsto \frac{1}{s\ln^2 s}=\lambda(s)$ is monotonically decreasing and the limit of $v_{\lambda}$, as ${\lambda\to 0+}$, exists and equals $v$.
As the function $s\mapsto \frac{1}{\ln s}$ decreases to $0$ as $s\to\infty$,
there is a positive constant $M_2$ such  that
\begin{equation}\label{eqn:vlambdav} v_{\lambda(s)}(z)\geq v(z)-\ve/8 +\frac{1}{\ln M_2}\;\;\; \forall z\in Z \text{ and } s\geq M_2.
\end{equation}

Recall that $0<\ve<1/4$ and that $\gamma=1+\ve/9$. Therefore, $\ln \gamma<\frac{1}{36}<2^{-5}$.
As $\frac{\ln \gamma \ln (\gamma s)}{\ln s}\to_{s\to\infty}\ln \gamma$, there is a sufficiently large $M_3$ such that
\begin{equation}\label{eqn:Mlarge enotgh1}\frac{\ln \gamma \ln \gamma s}{\ln s}<2^{-5}\;\; \forall s\geq M_3.\end{equation}

The definition of $\lambda(s)$ implies that
$\lambda(\gamma s)/\lambda(s)\to_{s\to\infty}\gamma^{-1}>1-\ve/9$
and
$\lambda(\gamma^{-1}s)/\lambda(s)\to_{s\to\infty}\gamma=1+\ve/9$.
Along the monotonicity of $s\mapsto \lambda (s)$ we deduce that there is a constant $M_4$ such that
\begin{equation}\label{eqn:n1}
|\lambda(s)-\lambda(s')|<{\ve} \lambda(s)/8 \;\;\; \forall s ,s'\geq M_4 \mbox{ with } \gamma^{-1} s\leq s'\leq \gamma s.
\end{equation}

Equality (\ref{eqn:slambdaintegration}) along with inequality (\ref{eqn:n1}), imply that for $M>M_4$,
\begin{equation}\label{ineq:fromlnsdifferencetosdifferences}
\frac{1}{\ln s}\!-\!\frac{1}{\ln s'}\geq \lambda(s)(s'\!-\!s-\ve |s'\!-\!s|/8) \;\; \forall s ,s'\geq M \mbox{ with } \gamma^{-1} s\leq s'\leq \gamma s.
\end{equation}

\textbf{Bounding the payoff from below:}
Now, we will prove that for $M>\max (M_1,M_2,M_3,M_4)$, where:
\begin{itemize}
  \item $M_1$ ensures that the difference  $v_{\lambda_{i+1}}-v_{\lambda_i}$ is small (see Lemma~\ref{lem:variationv}),
  \item $M_2$ guarantees that $v_{\lambda(s)}(z) \geq v(z) - \varepsilon/8$ (see~\eqref{eqn:vlambdav}),
  \item $M_3$ ensures that $\frac{\ln \gamma \ln \gamma s}{\ln s}$ is sufficiently small (see~\eqref{eqn:Mlarge enotgh1}),
  \item $M_4$ guarantees that 
$|\lambda(s)-\lambda(s')|<{\ve} \lambda(s)/8$ whenever $s'\in[\gamma^{-1}s,\gamma s]$ (see~\eqref{eqn:n1}),
\end{itemize} 
the strategy $\s=\s_{\ve,M,\lambda}$ obeys inequality~\eqref{ine:uniforme}.

Let $\mcF_i$ denote the algebra of the play up to stage $i$, including the sequence of memories $s_1,\ldots,s_i$ and the state $z_i$.

Recall that $\ve<1/4$ and $0\leq r\leq 1$. Therefore, $|x_i-v_{_{\lambda_i}}(z_{i+1})+\ve/2|\leq  1+\ve/2<9/8$. Therefore,
the definition of the conditional probabilities of $s_{i+1}$, given $s_i,x_i,z_{i+1}$ implies that for every $(\overline{m}_t)$-based strategy $\tau$ of Player 2, we have
\begin{eqnarray}E_{\s,\tau} (|s_{i+1}- s_i|\mid \mcF_i)&\leq & E_{\s} (|x_i-v_{_{\lambda_i}}(z_{i+1})+\ve/2| \mid s_i,x_i,z_{i+1})<9/8.
\end{eqnarray}

The definition of the conditional probabilistic law of $s_{i+1}$ has three implications.
First,
(by using $\ve<1/4$ and therefore $-1<x_i-v_{_{\lambda_i}}(z_{i+1})+{\ve/2}<2$)
\begin{equation}\label{eqn:n3}
P(s_{i+1}\neq s_i \mid \mcF_i)\leq \frac{2}{s_i(\gamma-1)}. 
\end{equation}

 Second,
as $E(x_i-v_{_{\lambda_i}}(z_{i+1})+{\ve/2}\mid \mcF_i)=E(s_{i+1}-s_i\mid \mcF_i)$ on $s_i>M$ and  $E({x_i-v_{_{\lambda_i}}(z_{i+1})} +  {\ve/2}\mid \mcF_i)  \geq E(s_{i+1}-s_i\mid \mcF_i)-1$ on $s_i=M$,
\begin{equation}\label{eqn:3n}E(x_i-v_{_{\lambda_i}}(z_{i+1}) \mid \mcF_i)\geq -{\ve/2}+E (s_{i+1}-s_i\mid \mcF_i)-1_{\{s_i=M\}}.
\end{equation}

As $\s$ plays at stage $i$ an optimal strategy in the $\lambda_i$-discounted game,
$E(\lambda_i x_i +(1-\lambda_i)\,v_{_{\lambda_i}} (z_{i+1}) \mid \mcF_i)\geq v_{_{\lambda_i}} (z_{i})$, and therefore,
\begin{equation}\label{eqn:1}
E(v_{_{\lambda_i}} (z_{i+1})-v_{_{\lambda_i}} (z_{i})+\lambda_i (x_i-v_{_{\lambda_i}}(z_{i+1})) \mid \mcF_i)\geq 0.
\end{equation}

By (\ref{eqn:n2}) and (\ref{eqn:n3}),
\begin{eqnarray*}E\left(\|v_{_{\lambda_{i+1}}}-v_{_{\lambda_{i}}}\|\mid \mcF_i\right)&\leq & E(\frac{2}{s_i(\gamma-1)}\;\left|\frac{\ve(\gamma -1)}{\ln s_i}-\frac{\ve(\gamma -1)}{\ln s_{i+1}}\right|\mid \mcF_i)\\
&=& E\left(\frac{2\ve |\ln s_{i+1}-\ln s_i|}{s_i\ln s_i \ln s_{i+1}}\mid \mcF_i\right)
\leq E\left(\frac{2\ve \ln \gamma}{s_i\ln s_i \ln s_{i+1}}\mid \mcF_i\right)\\
&=&E\left(\frac{2\ve \ln \gamma \ln s_i}{s_i\ln^2\!s_i \ln s_{i+1}}\mid \mcF_i\right)=2\ve \lambda_iE\left(\frac{\ln \gamma \ln s_i}{ \ln s_{i+1}}\mid \mcF_i\right).
\end{eqnarray*}
  Therefore, by using inequality (\ref {eqn:Mlarge enotgh1}), 
  we deduce that
\begin{equation}\label{ineq:vlambdachange} E(v_{_{\lambda_{i}}}(z_{i+1})-v_{_{\lambda_{i+1}}}(z_{i+1})\mid \mcF_i)\geq -\ve \lambda_i/16.\end{equation}

By adding inequalities (\ref{eqn:3n}) and (\ref{ineq:vlambdachange}), we have
\begin{eqnarray*}E(x_i-v_{_{\lambda_{i+1}}}(z_{i+1}) \mid \mcF_i)&\geq& -\ve/2 +E (s_{i+1}-s_i\mid \mcF_i) - 1_{\{s_i=M\}}-\ve \lambda_i/16\\ &\geq&  -9\ve/16 +E (s_{i+1}-s_i\mid \mcF_i) - 1_{\{s_i=M\}}.\end{eqnarray*}

As the expectation is the expectation of the conditional expectation, we deduce that
\[E\,x_i \geq E\,v_{_{\lambda_{i+1}}}(z_{i+1}) -9\ve/16 +E (s_{i+1}-s_i) - E\,1_{\{s_i=M\}}.\]

Summing these inequalities over $i=1,\ldots,n$, using the inequality $s_{n+1}-s_1\geq 0$, and dividing by $n$, we deduce that
\begin{equation}\label{eqn:preparetouniform}E\frac1n\sum_{i=1}^nx_i \geq E\frac1n\sum_{i=1}^nv_{_{\lambda_{i+1}}}(z_{i+1})-9\ve/16 - E\frac1n\sum_{i=1}^n1_{\{s_i=M\}}.
\end{equation}

\begin{lem}\label{lem:3}
\begin{eqnarray}\;\;\label{eqn:15}E\frac1n\sum_{i=1}^nv_{_{\lambda_{i+1}}}(z_{i+1})&\geq &v(z_1)-\ve/8,
\text{ and }\\
\label{eqn:16}\;\;\;\;\;\;-E\frac1n\sum_{i=1}^n1_{\{s_i=M\}}&\geq & \frac{-9}{n\ve \lambda(M)}\;\geq\; -\ve/8 \hspace{1.8cm} \forall n\geq \frac{72}{\ve^2 \lambda(M)}.
\end{eqnarray}
\end{lem}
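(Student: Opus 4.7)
The two inequalities are proved by different but related potential-function arguments that convert Player~1's stage-wise $\lambda_i$-optimality into quantitative bounds.

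For~\eqref{eqn:15}, the idea is to work with the potential $W_i := v_{\lambda_i}(z_i) - 1/\ln s_i$. Combining~\eqref{eqn:1} with~\eqref{ineq:vlambdachange} yields
\[
E\bigl[v_{\lambda_{i+1}}(z_{i+1})\,\bigm|\,\mathcal{F}_i\bigr] \ \geq\ v_{\lambda_i}(z_i) \;-\; \lambda_i\, E[A_i - \varepsilon/2 \mid \mathcal{F}_i] \;-\; \varepsilon\lambda_i/16 ,
\]
where $A_i := x_i - v_{\lambda_i}(z_{i+1}) + \varepsilon/2$. On the other hand, a direct computation from the memory-update rule, together with~\eqref{eqn:slambdaintegration} and the calibration~\eqref{eqn:n1}, shows that
\[
E\bigl[\tfrac{1}{\ln s_{i+1}} - \tfrac{1}{\ln s_i}\,\bigm|\,\mathcal{F}_i\bigr] \ =\ -\lambda_i\, E[A_i \mid \mathcal{F}_i]\cdot\bigl(1 + O(\varepsilon)\bigr) .
\]
Subtracting these two estimates, the $-\lambda_i E[A_i\mid\mathcal{F}_i]$ contributions cancel up to an $O(\varepsilon\lambda_i)$ error, leaving $E[W_{i+1} - W_i \mid \mathcal{F}_i] \geq c\,\varepsilon\lambda_i \geq 0$ for some absolute constant $c > 0$. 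Hence $E W_{i+1} \geq W_1$, and since $1/\ln s_{i+1} \geq 0$, one gets $E\,v_{\lambda_{i+1}}(z_{i+1}) \geq W_1 = v_{\lambda(M)}(z_1) - 1/\ln M$. Applying~\eqref{eqn:vlambdav} at $s_1 = M \geq M_2$ bounds $v_{\lambda(M)}(z_1) \geq v(z_1) - \varepsilon/8 + 1/\ln M_2$, and the $1/\ln M$ and $1/\ln M_2$ terms offset when $M \geq M_2$; we conclude $E\,v_{\lambda_{i+1}}(z_{i+1}) \geq v(z_1) - \varepsilon/8$ for every $i$, whence~\eqref{eqn:15} by averaging.

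For~\eqref{eqn:16}, I use an analogous potential $F(s) := c/\ln s$ with $c$ of order $1/(\varepsilon\lambda(M))$. At every stage with $s_i = M$, Player~1 plays $\lambda(M)$-optimally, so~\eqref{eqn:1} specialized to the constant discount $\lambda(M)$ forces $E[A_i \mid \mathcal{F}_i, s_i = M] \geq \varepsilon/2$ up to a telescoping correction in $v_{\lambda(M)}(z_i)$. Since the escape probability from $M$ equals $A_i^+/(M(\gamma-1))$ and each escape decreases $F$ by roughly $c(\gamma-1)/\ln^2 M$, the process $F(s_i) + \sum_{j<i}\mathbf{1}_{\{s_j = M\}}$ becomes a supermartingale, giving $E\sum_{i=1}^\infty \mathbf{1}_{\{s_i = M\}} \leq F(M) = c/\ln M \leq 9/(\varepsilon\lambda(M))$. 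The second inequality of~\eqref{eqn:16} is immediate from $n \geq 72/(\varepsilon^2\lambda(M))$.

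The main obstacle is the exact cancellation in~\eqref{eqn:15}: the expected change of $1/\ln s_i$ matches $-\lambda_i E[A_i\mid\mathcal{F}_i]$ only up to a multiplicative error of $1 + O(\varepsilon)$, which must be absorbed into the $O(\varepsilon\lambda_i)$ slack from~\eqref{ineq:vlambdachange}. This is precisely why the large-$M$ calibrations~\eqref{eqn:n1} and~\eqref{eqn:Mlarge enotgh1} are set up in advance. A secondary concern for~\eqref{eqn:16} is that the telescoping correction in $v_{\lambda(M)}(z_i)$ over the stages with $s_i = M$ is not a genuine telescoping sum, but it can be controlled uniformly using $0 \leq v_{\lambda(M)} \leq 1$, which affects the final constant only by an $O(1)$ factor.
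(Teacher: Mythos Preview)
Your argument for~\eqref{eqn:15} is essentially the paper's own proof: the potential $W_i = v_{\lambda_i}(z_i) - 1/\ln s_i$ is exactly the paper's $Y_i$, and your chain of estimates is a reorganization of the paper's chain~\eqref{ineq:Ytovs}--\eqref{ineq:usingbasicineq}, landing on the same drift bound $E[W_{i+1}-W_i\mid\mathcal F_i]\ge c\,\ve\lambda_i$ and the same conclusion via~\eqref{eqn:vlambdav}.

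Your argument for~\eqref{eqn:16}, however, has a genuine gap. The process $F(s_i)+\sum_{j<i}\mathbf 1_{\{s_j=M\}}$ is \emph{not} a supermartingale. There are two failures. First, at stages with $s_i>M$ you do not analyze $F$ at all, yet there $E[A_i\mid\mathcal F_i]$ can be negative (take any stage where $x_i=0$ deterministically while the next state satisfies $v_{\lambda_i}(z_{i+1})=1$; then $A_i=\ve/2-1<0$), so $s_i$ drifts down and $F(s_i)$ drifts \emph{up}. Second, even at $s_i=M$ your drift claim $E[A_i^+\mid\mathcal F_i]\gtrsim\ve/2$ does not hold stage by stage: from~\eqref{eqn:1} you only get $E[A_i\mid\mathcal F_i]\ge \ve/2+\tfrac{1}{\lambda(M)}\bigl(v_{\lambda(M)}(z_i)-E[v_{\lambda(M)}(z_{i+1})\mid\mathcal F_i]\bigr)$, and that correction term is divided by the small number $\lambda(M)$. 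It does \emph{not} telescope when summed only over stages with $s_i=M$, so the appeal to $0\le v_{\lambda(M)}\le 1$ gives a bound of order $1/\lambda(M)$ per excursion, not $O(1)$. In short, dropping the game-state component $v_{\lambda_i}(z_i)$ from the potential is exactly what breaks the argument.

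The paper's route is much shorter and reuses what you already proved for~\eqref{eqn:15}. From your own drift inequality $E[Y_{i+1}-Y_i\mid\mathcal F_i]\ge c\,\ve\lambda_i$ (with $c=1/8$), sum over $i=1,\dots,n$ and take expectations to obtain
\[
E\,Y_{n+1}-Y_1 \ \ge\ \tfrac{\ve}{8}\,E\sum_{i=1}^n\lambda_i .
\]
Since $Y_{n+1}\le 1$ and $Y_1\ge -\ve/8$ (again by~\eqref{eqn:vlambdav}), this yields $E\sum_{i=1}^n\ve\lambda_i\le 9$. Finally, $\lambda_i=\lambda(M)$ whenever $s_i=M$, so $E\sum_{i=1}^n\mathbf 1_{\{s_i=M\}}\le 9/(\ve\lambda(M))$, which is~\eqref{eqn:16}. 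No separate potential is needed; the second inequality is a free byproduct of the first once you keep the \emph{strict} lower bound $c\,\ve\lambda_i$ rather than only $\ge 0$.
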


Before proving the lemma, we show that the lemma along with inequality (\ref{eqn:preparetouniform}) shows that $\s$  satisfies inequality (\ref{ine:uniforme}). Indeed, summing inequalities
(\ref{eqn:preparetouniform}), (\ref{eqn:15}), and (\ref{eqn:16}), and cancelling terms that appear in both sides of the sum of the inequalities, we have that
\[E\frac1n\sum_{i=1}^nx_i \geq v(z_1)-\frac{9\ve}{16}-\frac{\ve}{8}-\frac{\ve}{8}>v(z_1)-\ve \hspace{2cm} \forall n\geq \frac{72}{\ve^2 \lambda(M)},\]
which proves that $\s$ satisfies inequality (\ref{ine:uniforme}) with $n_\ve= \frac{72}{\ve^2 \lambda(M)}$.

\smallskip
Now we turn to the proof of Lemma \ref{lem:3}.
\begin{proof}
Define $Y_i=v_{_{\lambda_i}}(z_i)-\frac{1}{\ln s_i}$. Recall that we write $E$ for $E_{\s,\tau}$ for short.

In the following chain of an equality and inequalities,
equality (\ref{ineq:Ytovs}) follows from the definition of $Y_i$; inequality (\ref{ineq:tosdifferences}) follows by adding to the right hand side of equality (\ref{ineq:Ytovs}) inequality (\ref{ineq:vlambdachange}); inequality (\ref{ineq:tolambdas}) follows from inequality (\ref{ineq:fromlnsdifferencetosdifferences}); inequality (\ref{ineq:tolambdasaddition}) follows from the inequality $E(|s_{i+1}-s_i|\mid \mcF_i)\leq 2$;
inequality (\ref{ineq:tolambdaxv}) follows from the definition of the conditional distribution of $s_{i+1}$ given $(s_i, x_i, z_{i+1})$; and inequality (\ref{ineq:usingbasicineq}) follows from inequality (\ref{eqn:1}).

\begin{eqnarray}\nonumber  & &E(Y_{i+1}-Y_i\mid \mcF_i)\\
\label{ineq:Ytovs}&= &E(v_{_{\lambda_{i+1}}}(z_{i+1})-v_{_{\lambda_i}}(z_i)+\frac{1}{\ln s_i}-\frac{1}{\ln s_{i+1}}\mid \mcF_i)\\
\label{ineq:tosdifferences}&\geq & E(v_{_{\lambda_{i}}}(z_{i+1})-v_{_{\lambda_i}}(z_i)+\frac{1}{\ln s_i}-\frac{1}{\ln s_{i+1}}\mid \mcF_i)-\ve\lambda_i/16\\
\label{ineq:tolambdas}     &\geq & E(v_{_{\lambda_{i}}}(z_{i+1})-v_{_{\lambda_i}}(z_i)+\lambda_i(s_{i+1}-s_i-\ve|s_{i+1}-s_i|/8)\mid \mcF_i)\\
\nonumber & & -\ve\lambda_i/16\\
\label{ineq:tolambdasaddition}
&\geq & E(v_{_{\lambda_{i}}}(z_{i+1})-v_{_{\lambda_i}}(z_i)+\lambda_i(s_{i+1}-s_i)\mid \mcF_i)-5\ve\lambda_i/16\\
\label{ineq:tolambdaxv}    &\geq & E(v_{_{\lambda_{i}}}(z_{i+1})-v_{_{\lambda_i}}(z_i)+\lambda_i(x_i-v_{_{\lambda_i}}(z_{i+1}))\mid \mcF_i)+3\ve\lambda_i/16\\
\label{ineq:usingbasicineq}&\geq & 3\ve\lambda_i/16\geq \ve \lambda_i/8~.
\end{eqnarray}

By taking expectation we deduce that for every $j\geq 1$, we have $E Y_{j+1}-EY_j\geq 0$. Summing these inequalities over $j=1,\ldots,i$, we deduce that $EY_{i+1}\geq Y_1$.
As $v_{_{\lambda_{i+1}}}(z_{i+1})\geq Y_{i+1}$, which follow from the definition of $Y_i$,  and $Y_1\geq v(z_1)-\ve/8$, which follow from inequality (\ref{eqn:vlambdav}), we have, $E\,v_{_{\lambda_{i+1}}}(z_{i+1})\geq v(z_1)-\ve/8$ $\forall i\geq 1$, and hence inequality (\ref{eqn:15}) follows.

The above chain of inequalities shows that
\[E(Y_{i+1}-Y_i\mid \mcF_i)\geq \ve \lambda_i/8, \text{ and hence } EY_{i+1}-EY_i\geq E(\ve \lambda_i/8).\]
Summing these inequalities over $i=1,\ldots,n$ and using the inequalities $1 \geq Y_i$ and $Y_1\geq v(z_1)-\ve/8\geq -\ve/8$, we have
\[1\geq EY_{n+1}\geq Y_1+E\sum_{i=1}^n\ve \lambda_i/8\geq -\ve/8+E\sum_{i=1}^n\ve \lambda_i/8.\]
Therefore,
\begin{equation}\label{eqn:25}
9\geq E\sum_{i=1}^n\ve \lambda_i\geq \ve \lambda(M)E\sum_{i=1}^{n}1_{\{s_i=M\}}.
\end{equation}
From inequality (\ref{eqn:25}), we deduce that \[E\sum_{i=1}^{n}1_{\{s_i=M\}}\leq \frac{9}{\ve \lambda (M)}.\] Dividing by $n$, we obtain
\[E\frac1n\sum_{i=1}^{n}1_{\{s_i=M\}}\leq \frac{9}{n\ve \lambda(M)}\leq \frac{\ve}{8} \hspace{1cm} \forall n\geq \frac{72}{\ve^2\lambda(M)},\]
which establishes inequality (\ref{eqn:16}).
\end{proof}


\section{The Big Match with a clock and a finite public memory}
Our result about the limitations of bounded public memory is shown for the Big Match, the influential example of a stochastic game introduced by Gillette~\cite{gil57} we described in the introduction. Recall that this game has a single nonabsorbing state, in which each player has two actions.

The two actions of player 1 are labeled $A$ (the absorbing action) and $C$ (the continuing and safe action). The two actions of player 2 are labeled 0 for the action with $r(C,0)=1$ (and thus $r(A,0)=0^*$, denoting that the game transitions to an absorbing state with payoff~0) and $1$ for the action with $r(C,1)=0$ (and thus $r(A,1)=1^*$, denoting that the game transitions to an absorbing state with payoff~1).


For a strategy pair $\s$ of player 1 and $\tau$ of player 2, let $\gamma_n(\s,\tau)$ denote the expected average payoffs to Player 1 over the first $n$ stages, under the the distribution induced by $\s$ and $\tau$:
\[\gamma_n(\s,\tau):=E_{\s,\tau}\frac1n\sum_{t=1}^nr_t.\]

\begin{thm}\label{thm:limsupEworthless}
For every positive integer $M$, memory process $(m_t)_{t=1}^\infty$ in ${\mathcal M}_M$, $\delta>0$, and an $(m_t)$-based strategy $\s$ of player 1, there is a strategy $\tau$ of player 2, such that
\begin{equation}\label{eqn:limsupEworthless}\limsup_{n\to\infty}\gamma_n(\s,\tau)\leq \delta.
\end{equation}
Moreover, such a strategy $\tau$ can be chosen as a mixture of finitely many, not necessarily distinct, pure $(m_t)$-based strategies, each selected with equal probability.
\end{thm}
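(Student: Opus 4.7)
My plan is to parametrize $\s$ by $\pi_t(m) = \s_\alpha(t, z_0, m)[A]$, the conditional probability that Player 1 plays the absorbing action $A$ at stage $t$ in memory state $m$ (the nonabsorbing state is $z_0$). Since the memory process $(m_t)$ takes values in a finite set and is public, Player 2 observes $m_t$ and hence knows $\pi_t(m_t)$ at the moment of committing her own action---this is the key asymmetry relative to private memory that the proof exploits.

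I would split the argument by regime. Against the simple pure response $\tau_1 =$ always play 1, the probability of ever absorbing is bounded by the expected cumulative $A$-mass $E_{\s,\tau_1}\sum_{t} \pi_t(m_t)$; moreover, the non-absorbed stage payoff is $r(C,1) = 0$, so if this cumulative mass is small we already obtain $\limsup_n \gamma_n(\s, \tau_1) \leq \delta$ directly. In the complementary regime the cumulative $A$-mass is large under any response, absorption is essentially certain, and Player 2 must use her observation of the public memory to steer absorption to $(A,0)$ rather than $(A,1)$, yielding payoff 0 forever.

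To handle both regimes with a single finite equal-weight mixture, I would construct $K$ pure $(\overline{m}_t)$-based strategies $\tau^{(1)},\ldots,\tau^{(K)}$, each parametrized by a discrete threshold $k/K$: strategy $\tau^{(k)}$ plays action 1 as long as an observable statistic of the public history (such as a running cumulative of $\pi_s(m_s)$) stays below $k/K$, and switches to action 0 once it crosses. Each $\tau^{(k)}$ is deterministic given the observed public history, hence a valid pure $(\overline{m}_t)$-based response. The uniform mixture over $k=1,\ldots,K$ approximates a smooth threshold on this running statistic, calibrated so that, conditional on absorption, the probability of the bad outcome $(A,1)$ is at most $O(1/K)$, while pre-absorption contributions to $\gamma_n$, averaged over the mixture, are bounded by $\delta$; $K = O(1/\delta)$ should suffice.

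The main obstacle is that any time-independent mixture of the elementary strategies ``always 0'' and ``always 1'' is stuck at the Big Match value $1/2$, so the finiteness and observability of the public memory must enter essentially. Technically, this means carefully coupling the random threshold index $k$ with the memory trajectory $(m_t)$, which itself depends on Player 2's past play through the memory-updating rule, and verifying that under this coupling the event ``absorb at $(A,1)$'' has probability tending to 0 as $K\to\infty$ uniformly in $\s$. The finiteness of the public memory is precisely what prevents Player 1 from evading this threshold mechanism by maintaining the unbounded counter central to the Blackwell--Ferguson optimal strategy; any attempt to encode such a counter into the finitely many public memory states exposes its value to Player 2, who then sets her threshold accordingly.
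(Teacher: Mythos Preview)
Your threshold construction has a concrete gap. Take $M=2$ and let Player~1 use the following public-memory strategy: in memory state~$0$ play $A$ with probability~$\epsilon$, in memory state~$1$ always play~$C$; the (deterministic) memory update sends state~$0$ to state~$1$ the first time $j_t=0$ and is otherwise the identity. Against your $\tau^{(k)}$, the memory stays in state~$0$ throughout the play-$1$ phase, so the cumulative $\sum_s\pi_s(m_s)$ grows linearly and the switch to~$0$ eventually occurs. At the first stage of the play-$0$ phase, with probability~$\epsilon$ the game absorbs at $0^*$, but with probability $1-\epsilon$ Player~1 plays~$C$, the memory moves to state~$1$, and from then on the stage payoff is $r(C,0)=1$ forever. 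Hence $\limsup_n\gamma_n(\sigma,\tau^{(k)})\geq 1-\epsilon$ for every~$k$, and no mixture of the~$\tau^{(k)}$ does better. The missing ingredient---which you flag as an ``obstacle'' but do not resolve---is that Player~2's action at stage~$t$ must condition on the \emph{current} memory state~$m_t$, not only on a scalar statistic of the history: once you switch to action~$0$ uniformly across all memory states, Player~1 detects this one stage later through the memory update and thereafter safely plays~$C$.

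The paper's proof is organized quite differently. It starts from $\tau^1\equiv 0$ (not $\equiv 1$) and inductively builds an increasing sequence $\tau^1\subseteq\tau^2\subseteq\cdots$ of pure $(m_t)$-based strategies, viewed as subsets of $\mathbb{N}\times[M]$ on which Player~2 plays~$1$. Whenever $r^i_t:=E_{\sigma,\tau^i}r_t\geq\delta$ at a late stage~$t$, there must exist a memory state $m(t)$ with $P_{\sigma,\tau^i}(T_*\geq t,\ m_t=m(t))\geq\delta/(3M)$ and $\tau^i(t,m(t))=0$; one then sets $\tau^{i+1}(t,m(t))=1$ for all such~$t$ beyond some $n_i$ chosen large enough to keep $\sum_{(t,m)\in\tau^{i+1}}\sigma(t,m)[A]<\delta/3$ and hence $P_{i+1}(1^*)<\delta/3$. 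Because there are only $M$ memory states, for each fixed large~$t$ at most $M+1$ indices~$i$ can satisfy $r^i_t\geq\delta$ (each such~$i$ forces a fresh $0\to 1$ flip at that stage). A uniform mixture over any $|X|>(M+1)/\delta$ of these strategies then gives $\limsup_n\gamma_n(\sigma,\tau)<3\delta$. Applied to the two-state example above, already $\tau^2$ plays~$1$ in memory state~$1$ and~$0$ in state~$0$ at all late stages---exactly the per-state conditioning your threshold scheme lacks.
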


\begin{corollary}\label{cor:liminfEworthless}For every positive integer $M$, memory process $(m_t)_{t=1}^\infty$ in ${\mathcal M}_M$, $\delta>0$, and an $(m_t)$-based strategy $\s$ of player 1, there is a pure $(m_t)$-based strategy $\tau$ of player 2, such that
\begin{equation}\label{eqn:liminfEworthless} \liminf_{n\to\infty}\gamma_n(\s,\tau)\leq \delta.
\end{equation}
\end{corollary}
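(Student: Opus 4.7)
The plan is to deduce the corollary directly from Theorem~\ref{thm:limsupEworthless} by a pigeonhole argument over the finitely many components of the mixture strategy provided there. Given $M$, a memory process $(m_t)$ in $\mathcal{M}_M$, $\delta>0$, and an $(m_t)$-based strategy $\sigma$ of Player~1, I would first apply Theorem~\ref{thm:limsupEworthless} with $\delta/2$ in place of $\delta$. Its moreover clause yields finitely many pure $(m_t)$-based strategies $\tau_1,\ldots,\tau_N$ of Player~2 such that the uniform mixture $\tau^{\text{mix}}$ over them satisfies $\limsup_{n\to\infty}\gamma_n(\sigma,\tau^{\text{mix}})\leq \delta/2$.

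The second step is to unfold the mixture. Since $\tau^{\text{mix}}$ selects one index $k$ uniformly and then plays $\tau_k$ throughout, the induced distribution on plays is $\frac{1}{N}\sum_{k=1}^N P_{\sigma,\tau_k}$, and by linearity of expectation
\[\gamma_n(\sigma,\tau^{\text{mix}}) \;=\; \frac{1}{N}\sum_{k=1}^N \gamma_n(\sigma,\tau_k).\]
Combined with the $\limsup$ bound from Step~1, this gives $\limsup_{n\to\infty}\frac{1}{N}\sum_{k=1}^N\gamma_n(\sigma,\tau_k)\leq \delta/2$, so for every sufficiently large $n$ the arithmetic mean of the $\gamma_n(\sigma,\tau_k)$ is at most $\delta$. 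Because the minimum of a finite collection is bounded by its mean, for each such $n$ there exists an index $k(n)\in\{1,\ldots,N\}$ with $\gamma_n(\sigma,\tau_{k(n)})\leq \delta$.

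The third step is the pigeonhole: as $k(n)$ takes values in the finite set $\{1,\ldots,N\}$ while ranging over infinitely many $n$, some fixed index $k^\star$ satisfies $k(n)=k^\star$ for infinitely many $n$. For this index, $\gamma_n(\sigma,\tau_{k^\star})\leq \delta$ holds along an infinite subsequence, whence $\liminf_{n\to\infty}\gamma_n(\sigma,\tau_{k^\star})\leq \delta$. Setting $\tau:=\tau_{k^\star}$, which is a pure $(m_t)$-based strategy by construction, yields inequality~\eqref{eqn:liminfEworthless} and completes the proof.

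I do not expect any substantive obstacle: the corollary is essentially a derandomization via pigeonhole of the mixture representation supplied by the moreover clause of Theorem~\ref{thm:limsupEworthless}. The only point requiring care is the choice of slack---applying the theorem with $\delta/2$ rather than $\delta$---so that the $\limsup$ bound on the mean can be converted into a pointwise bound of $\delta$ valid for all sufficiently large $n$, which is precisely what feeds the pigeonhole to produce an infinite subsequence on a single component $\tau_{k^\star}$.
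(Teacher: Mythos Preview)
Your proof is correct and follows essentially the same route as the paper: decompose the mixture into its pure components, use linearity of expectation, and extract a single component by an averaging/pigeonhole argument. The paper's version is slightly slicker in that it avoids the $\delta/2$ slack by invoking directly the inequality $\frac{1}{N}\sum_{k}\liminf_{n}\gamma_n(\sigma,\tau_k)\leq \liminf_{n}\frac{1}{N}\sum_{k}\gamma_n(\sigma,\tau_k)\leq \limsup_{n}\gamma_n(\sigma,\tau^{\mathrm{mix}})\leq \delta$, which already forces some component to have $\liminf\leq\delta$.
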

\begin{proof}[Proof of Corollary \ref{cor:liminfEworthless}]
Let $\tau$ be the uniform mixture of the finitely many pure $(m_t)$-based strategies $\tau^i$, $i=1,\ldots,k$, of player 2 such that (\ref{eqn:limsupEworthless}) holds.
Then, $\frac1k\sum_{i=1}^k\gamma_n(\s,\tau^i)=\gamma_n(\s,\tau)$. As (\ref{eqn:limsupEworthless}) holds and
\[\frac1k\sum_{i=1}^k\liminf_{n\to\infty} \gamma_n(\s,\tau^i)\leq \liminf_{n\to\infty}\frac1k\sum_{i=1}^k \gamma_n(\s,\tau^i)\leq\limsup_{n\to\infty} \gamma_n(\s,\tau),\]
there is $i$ such that $\liminf_{n\to\infty}\gamma_n(\s,\tau^i)\leq \delta$.
\end{proof}

The proof of Theorem \ref{thm:limsupEworthless} is obtained by defining a sequence of (not necessarily distinct) $(m_t)$-based strategies $\tau^i$ of player 2,
such that (\ref{eqn:limsupEworthless}) holds for any strategy $\tau$ that is mixture of sufficiently many of the strategies $\tau^i$. The next lemma states the properties of the sequence of $(m_t)$-based strategies $\tau^i$ of player 2 that are used in the proof of (\ref{eqn:limsupEworthless}).

\begin{lem}\label{lem:Xstrategies}For every positive integer $M$ and ever memory process $(m_t)_{t=1}^\infty$ in ${\mathcal M}_M$, $\delta>0$,  and an $(m_t)$-based strategy $\s$ of player 1, there is a sequence $\tau^i$, $i\in \mathbb{N}$, of  pure $(m_t)$-based strategies of player 2 and a sequence $n_i$ of positive integers, such that
\begin{equation}\label{eqn:fewbadtauinew}
\forall t\geq n_i,\;\;\; \sum_{i=1}^\infty1_{\{r^i_t\geq \delta\}}1_{\{t\geq n_i\}}\leq M+1, \;\;\;\mbox{ where }\;r^i_t:=E_{\s,\tau^i}r_t.
\end{equation}
\end{lem}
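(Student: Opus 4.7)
The plan is to construct the countable sequence of pure $(m_t)$-based Player~2 strategies $\tau^i$ by enumerating, for each start time $t_0 \in \mathbb{N}$, a family of defensive strategies parameterized by the $M$ public memory states, and to verify the $M+1$ bound via a combinatorial argument on the finite memory structure.

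First I would establish notation. For each stage $t$ and memory state $m \in \{0,\ldots,M-1\}$, define $p_t(m) := P_\sigma(i_t = A \mid m_t = m,\ \text{not yet absorbed})$; these are computable by Player~2 since the public memory and $\sigma$ together determine them. I would decompose the expected stage-$t$ payoff under any pure $(m_t)$-based $\tau$ as
\[ r^\tau_t \;=\; \alpha^\tau_t \;+\; \sum_{m=0}^{M-1} \pi^\tau_t(m)\, g^\tau_t(m), \]
where $\alpha^\tau_t$ is the probability of absorption at payoff $1$ before stage $t$, $\pi^\tau_t(m)$ is the probability of being in memory state $m$ and not yet absorbed at the start of stage $t$, and $g^\tau_t(m) = 1 - p_t(m)$ if $\tau$ plays action $0$ at stage $t$ in memory state $m$ and $g^\tau_t(m) = p_t(m)$ otherwise. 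This separates the stage-$t$ decision from the cumulative history, and reveals that Player~2's stage-$t$ action, as a function from $M$ memory states to $\{0,1\}$, admits only $M+1$ threshold-type extremal choices (one for each position in the ordering of $p_t(m)$ over $m$).

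For the construction, for each pair $(t_0, k) \in \mathbb{N} \times \{1, \ldots, M\}$ I would define $\tau^{(t_0, k)}$ to play action $1$ for $t < t_0$ and, at stage $t_0$, play action $0$ on the $k$ memory states with largest $p_{t_0}(m)$ (ties broken arbitrarily), reverting to action $1$ thereafter. Set $n_{(t_0, k)} := t_0$. A direct calculation yields, for $t \geq t_0$, the identity $r^{(t_0, k)}_t = q_t - A_{t_0}(k)$, where $q_t$ is the always-$1$ baseline absorption-at-$1$ probability by stage $t$ and $A_{t_0}(k)$ is the probability of defensive absorption (at payoff $0$) at stage $t_0$, monotone increasing in $k$. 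The bound $\sum_i \mathbf{1}_{\{r^i_t \geq \delta,\ n_i \leq t\}} \leq M+1$ would then follow by combining this decomposition with a Helly/pigeonhole-type argument exploiting the fact that the extremal decision space at each stage has ``dimension'' $M$, together with the global conservation law $\sum_{t_0} A_{t_0}(M) \leq 1$ bounding the total defensive absorption.

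The main obstacle is the last step: controlling the older strategies with $t_0 \ll t$. Since such $\tau^{(t_0, k)}$ can have small $A_{t_0}(k)$ and hence $r^{(t_0, k)}_t$ close to $q_t$, many of them could remain dangerous simultaneously. Matching the count to exactly $M+1$ requires carefully balancing the pairs $(t_0, k)$ with $A_{t_0}(k) > q_t - \delta$: the conservation law limits how often this fails for each $k$, but combining across the $M$ levels and across start times into a single bound is delicate. I expect the proof to require either a refined construction---parameterizing $\tau^i$ by stopping times on the memory process rather than by fixed start times, or consolidating the $M$ threshold levels into a single time-varying attack pattern---or an inductive invariant along the sequence certifying that at every stage the dangerous active strategies form a configuration of cardinality at most $M+1$ in the memory-occupation simplex.
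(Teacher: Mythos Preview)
Your proposal has a genuine gap, and it is precisely the one you flag as the ``main obstacle'': the single-stage defense strategies $\tau^{(t_0,k)}$ do not satisfy the $M{+}1$ bound. Consider Player~1 playing $A$ with some tiny probability $\epsilon>0$ at every stage, independent of memory. Each $A_{t_0}(k)$ is then at most $\epsilon$, while the always-$1$ baseline $q_t\to 1$. Hence for all large $t$ and every $t_0\le t$ one has $r^{(t_0,k)}_t \ge q_t-\epsilon \ge \delta$, so $\Theta(Mt)$ active strategies are simultaneously bad rather than $M{+}1$. The conservation law $\sum_{t_0}A_{t_0}(M)\le 1$ bounds total defensive absorption, but says nothing about how many $t_0$ have $A_{t_0}(k)$ \emph{small}, which is what you would need. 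A secondary problem: since the memory update may depend on $j_t$, switching Player~2's action at stage $t_0$ perturbs the entire subsequent memory trajectory, so the clean identity $r^{(t_0,k)}_t=q_t-A_{t_0}(k)$ is not valid in general.

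The paper supplies exactly the mechanism you are missing: monotonicity of the sequence $\tau^i$. It builds the $\tau^i$ inductively with $1_{\tau^1}\subseteq 1_{\tau^2}\subseteq\cdots$ as subsets of $\mathbb{N}\times[M]$, starting from $\tau^1\equiv 0$ and maintaining the invariant $\sum_{(t,m)\in 1_{\tau^i}}\sigma(t,m)[A]<\delta/3$, which keeps $P_i(1^*)<\delta/3$. The inductive step shows that whenever $r^i_t\ge\delta$, some memory state $m(t)$ with $(t,m(t))\notin 1_{\tau^i}$ carries mass $P_i(t,m(t))\ge\delta/(3M)$; one adjoins all such pairs with $t\ge n_i$ to form $\tau^{i+1}$, choosing $n_i$ large enough (via $P_i(0^*_{>n})\to 0$) that the invariant survives. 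The $M{+}1$ bound is then a one-line chain argument: at any fixed $t$, the sets $\{m:\tau^i(t,m)=1\}$ form a nondecreasing chain in $[M]$, and each index $i$ with $t\ge n_i$ and $r^i_t\ge\delta$ forces a strict inclusion at step $i$; a chain in $[M]$ admits at most $M$ strict inclusions. The nesting---not a one-shot defense at a single stage, nor a threshold enumeration over $k$---is what converts ``$M$ memory states'' into the required count. Your closing remark about needing ``an inductive invariant along the sequence'' is pointing in the right direction, but the invariant is the nesting itself together with the $\delta/3$ bound on cumulative $A$-probability, and the construction must add action-$1$ plays across \emph{all} future bad stages at once, not at a single stage $t_0$.
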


\begin{proof}[Proof of Lemma \ref{lem:Xstrategies}]
Let $(m_t)_{t=1}^\infty$ be a memory process in ${\mathcal M}_M$, $M$ a positive integer, and let $\s$ be an $(m_t)$-based strategy of player 1.
We use the symbols $[M]$ 
to denote the sets $\{1,\ldots,M\}$. 
A pure $(m_t)$-based strategy $\tau$ of player 2 is a function from $\mathbb{N}\times [M]$ to $\{0,1\}$. We identify the pure $(m_t)$-based strategy $\tau$ with the set $1_\tau$, which consists of all pairs $(t,m)$ such that $\tau(t,m)=1$. That is,
\[1_\tau:=\{(t,m): t\geq 1, 1\leq m\leq M, \mbox{ and } \tau(t,m)=1\}.\]

{\bf The properties of the strategies $\tau^i$.}

The set of strategies $\tau^i$, $1\leq i$, will satisfy the following properties.

\begin{equation}\label{property:monotonicity}
\tau^i(t,m)\leq \tau^{i+1}(t,m) \;\;\;\forall (t,m)\in \mathbb{N}\times [M],
\end{equation}
and
$\forall i\in \mathbb{N} \;\;\exists n_i \mbox{ s.t. } \forall t\geq n_i$,
\begin{equation}\label{property:existenceofdesiredm}
 r^i_t\geq \delta \implies \exists m \mbox{ s.t. } 0=\tau^i(t,m)<\tau^{i+1}(t,m)=1.
\end{equation}

{\bf The definition of $\tau^i$.}
We define $\tau^i$ (as a function of the strategy $\s\in \mcM_M$) by induction on $i$. The definition will imply that
\begin{equation}\label{eqn:sumsigmataui}\sum_{(t,m)\in \tau^i}\s(t,m)[A]<\delta/3.
\end{equation}

Set $T_*:=\inf \{t: i_t=A\}$, and if no absorbing action is played (i.e., if $\{t: i_t=A\}$ is the empty set), we set $T_*=\infty$.
$T_*$ is a stopping time whose distribution depends on the strategies  of player 1 and player 2. The event that the play of the game is absorbed at $1^*$, respectively at $0^*$, is denote by $1^*$, respectively $0^*$.

The expectation w.r.t. the probability $P_{\s,\tau}$ is denoted by $E_{\s,\tau}$.

Given a probability $P$ on plays, the $P$-probability of the event $1^*$, respectively $0^*$, is denoted by $P(1^*)$, respectively $P(0^*)$.
That is, ${P}(1^*):={P}(T_*=t<\infty, i_t=A, j_t={1})$ and ${P}(0^*)={P}(T_*=t<\infty, i_{t}=A, j_{t}=0)$.

Let ${P}(t,m)={P}(T_*\geq t \mbox{ and } m_t=m)$, and let $P_i$ denote the probability distribution induced by $\s$ and $\tau^i$, i.e., $P_i:=P_{\s,\tau^i}$.

Note that $P_i(1^*)=\sum_{(t,m)\in \tau^i}P_i(t,m)\s(t,m)[A]\leq \sum_{(t,m)\in \tau^i}\s(t,m)[A]$. Hence, if $\tau^i$ satisfies property (\ref{eqn:sumsigmataui}), then $P_i(1^*)<\delta/3$.

{\bf Definition of $\tau^1$.}
\[\tau^1(t,m)=0\;\; \forall (t,m).\]
Note that $\tau^1$ satisfies property (\ref{eqn:sumsigmataui}).

{\bf Inductive definition of $\tau^{i+1}$.}
Assume that $\tau^i$ is an $(m_t)$-based strategy (of player 2) that satisfies property (\ref{eqn:sumsigmataui}).
Set ${P}(1^*_{<t}):={P}(T_*=s<t, i_s=A, j_s=1)$ and ${P}(0^*_{>t}):={P}(T_*=s>t, i_s=A, j_s=0)$. Note that
\begin{eqnarray*}r^i_t&=&P_i(1^*_{\leq t})+\sum_{m: (t,m)\notin \tau^i} P_i (t,m)\sigma(t,m)[C]\\
&\leq &P_i(1^*_{\leq t})+\sum_{m: (t,m)\notin \tau^i} P_i (t,m)\\
&<&\delta/3+\sum_{m: (t,m)\notin \tau^i} P_i (t,m)(1_{\{P_i (t,m)\geq \delta/(3M)\}}+1_{\{P_i (t,m)< \delta/(3M)\}})\\ 
&<&2\delta/3+\sum_{m: (t,m)\notin \tau^i} 1_{\{P_i (t,m)\geq \delta/(3M)\}}. 
\end{eqnarray*}
The first inequality follows from the inequality $\sigma(t,m)[C]\leq 1$.
The second inequality follows from the inequality $P_i(1^*_{<t})<\delta/3$ and the equality $1=1_{\{P_i (t,m)\geq \delta/(3M)\}}+1_{\{P_i (t,m)< \delta/(3M)\}}$. Finally, the third inequality follows from the inequalities $\sum_m P_i(t,m)1_{\{P_i(t,m)<\delta/(3M)\}}< \sum_m \delta/(3M)=M\delta/(3M)=\delta/3$ and
$P_i(t,m)1_{\{P_i(t,m)\geq \delta/(3M)\}}\leq 1_{\{P_i(t,m)\geq \delta/(3M)\}}$.

Therefore,
\[r^i_t\geq \delta \implies \sum_{m: (t,m)\notin \tau^i} 1_{\{P_i (t,m)\geq \delta/(3M)\}}>\delta/3.\]
Therefore, for every $t$ such that $r^i_t\geq \delta$, there is $m(t)\in M$ such that $(t,m(t))\notin \tau^i$ and $P_i(t,m(t))\geq \delta/(3M)$.

Set $T_{n,\delta}:=\{(t,m(t)): t\geq n \mbox{ and } r^i_t\geq \delta\}$.

\begin{eqnarray*}\sum_{(t,m(t))\in T_{n,\delta}}\frac{\delta}{3M}\,\s(t,m(t))[A]&\leq& \sum_{(t,m(t))\in T_{n,\delta}}P_i(t,m(t))\s(t,m(t))[A]\\
&\leq & P_i(0^*_{>n})\to_{n\to\infty} 0.
\end{eqnarray*}
Therefore, as $\sum_{(t,m)\in \tau^i}\s(t,m)[A]<\delta/3$, there exists $n_i$ such that \[\sum_{(t,m)\in \tau^i}\s(t,m)[A]+\sum_{(t,m(t))\in T_{n_i,\delta}}\s(t,m(t))[A]<\delta/3.\]

Let $\tau^{i+1}=\tau^i\cup T_{n_i,\delta}$. The pure $(m_t)$-based strategy $\tau^{i+1}$ satisfies condition (\ref{eqn:sumsigmataui}). Note that $T_{n_i,\delta}$ may be the empty set, and in this case $\tau^{i+1}=\tau^i$.

Let $X\subset \mathbb{N}$ with $\infty>|X|>(M+1)/\delta$ and set $t_X=\max_{j\leq i\in X} n_j$. Fix $t>t_X$. For all $i,j\in X$ with $i<j$ and $r^i_t\geq \delta$, there is $m\in M$ with $(m,t)\in \tau^{i+1}\setminus \tau^i\subseteq \tau^j\setminus \tau^i$. As the number of memories is $M$, there are at most $M+1$ elements $i\in X$ with $r^i_t\geq \delta$.
This completes the proof of Lemma \ref{lem:Xstrategies}.
\end{proof}

\begin{proof}[Proof of Theorem \ref{thm:limsupEworthless}]
Fix  a positive integer $M$, a memory process $(m_t)_{t=1}^\infty$ in ${\mathcal M}_M$, $\delta>0$, and an $(m_t)$-based strategy $\s$ of player 1.

Let $\tau^i$, $1\leq i$, be a sequence of pure $(m_t)$-based strategies of player 2, such that for any finite set $X\subset \mathbb{N}$, there is a positive integer $t_\delta$, such that
\begin{equation}\label{eqn:fewbadtaui}
 \forall t\geq t_\delta, \;\;\;|\{i\in X: E_{\s,\tau^i}r_t:=r^i_t>\delta\}|=\sum_{i=1}^\infty1_{\{r^i_t\geq \delta\}}\leq M+1.
\end{equation}
The existence of the such a sequence of strategies $\tau^i$,  $1\leq i<\infty$, that satisfy (\ref{eqn:fewbadtaui}) is guaranteed by Lemma \ref{lem:Xstrategies}.

Fix a finite set $X\subset \mathbb{N}$ with $|X|>(M+1)/\delta$. Let $\tau$ be the uniform mixture of the strategies $\tau^i$, $i\in X$. For every $t$, $r^i_t \leq 1_{\{r^i_t\geq \delta\}}+\delta$, and therefore,
\[\forall t\geq t_\delta, \;\;\;E_{s,\tau}r_t=\frac{1}{|X|}\sum_{i\in X} r^i_t\leq \frac{1}{|X|} \sum_{i\in X}(1_{\{r^i_t\geq \delta\}}+\delta)\leq \frac{M+1}{|X|}+\delta<2\delta.\]
Hence, if $n $ is sufficiently large so that $ t_\delta/n\leq \delta$, then
\[\gamma_n(\s,\tau)=E_{\s,\tau}\frac1n\sum_{t=1}^nr_t\leq \frac{t_\delta}{n}+2\delta<3\delta,\]
which completes the proof of the theorem.
\end{proof}


\section{Future Directions and Open Problems}
The near-optimal strategies constructed in this paper have three key properties:
\begin{itemize}
\item Public memory (as opposed to private memory).
\item Time-independent action selection (as opposed to time dependent action selection).
\item Time-independent memory updating (as opposed to time dependent memory updating).
\end{itemize}

A clear direction for future research is determining how these properties affect the number of memory states needed for near-optimal strategies. In this directtion, a major open question is whether in any stochastic game there exists a finite-memory strategy that is near-optimal. This problem has been resolved for absorbing games:
\begin{itemize}
\item In any absorbing game, for every $\varepsilon>0$, there exists a private-memory strategy with time-dependent action selection and memory updating that uses only finitely many memory states while being both uniform $\varepsilon$-optimal and $\liminf$ $\varepsilon$-optimal \cite{hanibsney21}.
    \end{itemize}

However, whether such a strategy exists in general stochastic games remains unknown.
Several other important questions remain open about public-memory strategies, both in general stochastic games and in the Big Match:

\begin{itemize}
\item Tightness of $O(\log n)$ Memory: Is the bound of $O(\log n)$ memory states in the first $n$ stages of a public-memory, uniform $\ve$-optimal strategy tight?
\item $\liminf$ and Uniform $\ve$-Optimality: Is there a public-memory strategy that uses at most $O(\log n)$ in the first $n$ stages and is both uniform $\ve$-optimal and $\liminf$ $\ve$-optimal?
    \item Minimal Memory for $\limsup$ Optimality: What is the smallest number of public memory states required in the first $n$ stages for a $\limsup$ $\ve$-optimal strategy?
\end{itemize}

\bibliographystyle{abbrv}
\bibliography{SGwSS}

\end{document}